\newcommand{\ignore}[1]{}
\newtheorem{theorem}{Theorem}[section]
\newtheorem{lemma}[theorem]{Lemma}
\newtheorem{corollary}[theorem]{Corollary}
\newtheorem{property}[theorem]{Property}
\newtheorem{proposition}[theorem]{Proposition}
\let\oldReturn\Return
\renewcommand{\Return}{\State\oldReturn}
\definecolor{winered}{rgb}{0.5,0,0}
\setlist[description]{leftmargin=\parindent,labelindent=\parindent}
\title{Improved Linear-Time Algorithm for Computing the $4$-Edge-Connected Components of a Graph}
\author{Loukas Georgiadis$^{1}$ \and Giuseppe F. Italiano$^{2}$ \and Evangelos Kosinas$^{1}$}
\begin{document}
\maketitle

\begin{abstract}
We present an improved algorithm for computing the $4$-edge-connected components of an undirected graph in linear time.
The new algorithm uses only elementary data structures, and it is simple to describe and to implement in the pointer machine model of computation.
\end{abstract}

\footnotetext[1]{Department of Computer Science \& Engineering, University of Ioannina, Greece. E-mail: \{loukas,ekosinas\}@cs.uoi.gr.
Research at the University of Ioannina supported by the Hellenic Foundation for Research and Innovation (H.F.R.I.) under the ``First Call for H.F.R.I. Research Projects to support Faculty members and Researchers and the procurement of high-cost research equipment grant'', Project FANTA (eFficient Algorithms for NeTwork Analysis), number HFRI-FM17-431.}
\footnotetext[2]{LUISS University, Rome, Italy. E-mail: gitaliano@luiss.it. Partially supported by MIUR, the Italian Ministry for Education, University and Research, under PRIN Project AHeAD (Efficient Algorithms for HArnessing Networked Data).}

\section{Introduction}
\label{section:introduction}

Determining or testing the edge connectivity of a graph $G=(V,E)$, as well as computing notions of connected components, is a classical subject in graph theory, motivated by several application areas (see, e.g., \cite{connectivity:nagamochi-ibaraki}), that has been extensively studied since the 1970's.
An \emph{(edge) cut} of $G$ is a set of edges $S \subseteq E$ such that $G \setminus S$ is not connected.
We say that $S$ is a \emph{$k$-cut} if its cardinality is $|S|=k$.
The \emph{edge connectivity} of $G$, denoted by $\lambda(G)$, is the minimum cardinality of an edge cut of $G$.
A graph is \emph{$k$-edge-connected} if $\lambda(G) \ge k$.
A cut $S$ separates two vertices $u$ and $v$, if $u$ and $v$ lie in different connected components of $G \setminus S$.
Vertices $u$ and $v$ are $k$-edge-connected if there is no $(k-1)$-cut that separates them. By Menger's theorem~\cite{menger}, $u$ and $v$ are $k$-edge-connected if and only if there are $k$-edge-disjoint paths between $u$ and $v$.
A \emph{$k$-edge-connected component} of $G$ is a maximal set $C \subseteq V$ such that there is no $(k-1)$-edge cut in $G$ that disconnects any two vertices $u,v \in C$ (i.e., $u$ and $v$ are in the same connected component of $G \setminus S$ for any $(k-1)$-edge cut $S$).
We can define, analogously, the \emph{vertex cuts} and the \emph{$k$-vertex-connected components} of $G$.
It is known how to compute the $(k-1)$-edge cuts, $(k-1)$-vertex cuts,  $k$-edge-connected components and $k$-vertex-connected components of a graph in linear time for $k \in \{2,3\}$~\cite{GI:ECtoVC,3-connectivity:ht,NagamochiIbaraki:3CC,dfs:t,Tsin:3CC}.
The case $k=4$ has also received significant attention \cite{3cuts:Dinitz,4C:Online,KR:4C,Kanevsky:4CC}, but until very recently, none of the previous algorithms achieved linear running time.
In particular,
Kanevsky and Ramachandran~\cite{KR:4C} showed
how to test whether a graph is $4$-vertex-connected in
$O(n^2)$ time. Furthermore,
Kanevsky et al.~\cite{Kanevsky:4CC} gave an $O(m+n\alpha(m,n))$-time algorithm to compute the $4$-vertex-connected components of a $3$-vertex-connected graph, where $\alpha$ is a functional inverse of Ackermann's function~\cite{dsu:tarjan}.
Using the reduction of Galil and Italiano~\cite{GI:ECtoVC} from edge connectivity to vertex connectivity, the same bounds can be obtained for $4$-edge connectivity. Specifically, one can test whether a graph is $4$-edge-connected in $O(n^2)$ time, and one can
compute the $4$-edge-connected components of a $3$-edge-connected graph in $O(m+n\alpha(m,n))$ time.
Dinitz and Westbrook~\cite{4C:Online} presented an $O(m+n\log{n})$-time algorithm to compute the $4$-edge-connected components of a general graph $G$ (i.e., when $G$ is not necessarily $3$-edge-connected).
Nagamochi and Watanabe~\cite{ni93} gave an $O(m+k^2n^2)$-time algorithm to compute the $k$-edge-connected components of a graph $G$, for any integer $k$.

Very recently, two linear-time algorithms for computing the $4$-edge-connected components of an undirected graph were presented in \cite{Linear4ECC:ESA21,Linear4ECC:Nadara:ESA21}.
The main part in both algorithms is the computation of the $3$-edge cuts of a $3$-edge-connected graph $G$. The algorithms operate on a depth-first search (DFS) tree $T$ of $G$~\cite{dfs:t}, with start vertex $r$, and compute $3$ types of cuts $C=\{e_1,e_2,e_3\}$, depending on the number of tree edges in $C$.
We refer to a cut $C$ that consists of $t$ tree edges of $T$ as a \emph{type-$t$ cut of $G$}.
The challenging cases are when $C$ is a type-$2$ or type-$3$ cut.
Nadara et al.~\cite{Linear4ECC:Nadara:ESA21} provided an elegant way to handle type-$3$ cuts. Specifically, they showed that computing all  type-$3$ cuts can be reduced, in linear time, to computing type-$1$ and type-$2$ cuts, by contracting the edges of $G \setminus T$.
To handle type-$2$ cuts in linear time, both \cite{Linear4ECC:ESA21} and \cite{Linear4ECC:Nadara:ESA21} require the use of the static tree disjoint-set-union (DSU) data structure of Gabow and Tarjan~\cite{dsu:gt}, which is quite sophisticated and not amenable to simple implementations.
Here, we present an improved version of the algorithm of \cite{Linear4ECC:ESA21} for identifying type-$2$ cuts, so that it only uses simple data structures.
The resulting algorithm relies only on basic properties of depth-first search (DFS)~\cite{dfs:t}, and on parameters carefully defined on the structure of a DFS spanning tree (see Section~\ref{sec:dfs}). As a consequence,
it is simple to describe and to implement, and it does not require the power of the RAM model of computation, thus implying the following \emph{new results}:

\begin{theorem}
\label{theorem:3-cuts}
The $3$-edge cuts of an undirected graph can be computed in linear time on a pointer machine.
\end{theorem}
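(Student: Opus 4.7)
The plan is to compute all $3$-edge cuts by first reducing to the case of a $3$-edge-connected graph. Every $3$-edge cut of $G$ either uses an edge of a $2$-edge cut (these are identifiable in linear time by the classical $3$-edge-connected components decomposition) or lies entirely inside a single $3$-edge-connected component, so it suffices to handle the latter case. For such a component $H$ I fix a DFS tree $T$ rooted at an arbitrary vertex, and with one DFS pass I compute the standard parameters used throughout: preorder numbers, subtree ranges, the $\mathit{low}$ and $\mathit{high}$ values, and the number of back edges crossing each tree edge.

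Following the type-$1$/type-$2$/type-$3$ classification recalled in the introduction, I treat each case in turn. Type-$1$ cuts $\{e,f_1,f_2\}$ are exactly those in which the tree edge $e$ is crossed by precisely two back edges $f_1,f_2$; they are read off directly from the back-edge counts together with per-vertex back-edge lists maintained during the DFS. Type-$3$ cuts are handled via the reduction of Nadara et al., which transforms them into type-$1$ and type-$2$ cuts in the auxiliary graph obtained by contracting all non-tree edges of $H$; this costs $O(n+m)$ once the type-$1$ and type-$2$ procedures are available. So the entire contribution collapses to computing every type-$2$ cut in linear time on a pointer machine.

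For a type-$2$ cut $\{e_1,e_2,f\}$ in the $3$-edge-connected graph $H$, the edges $e_1,e_2$ must lie on a common root-to-leaf path of $T$, and the cut is characterized by the condition that the symmetric difference of their crossing back-edge sets equals $\{f\}$. My plan is to assign to each tree edge $e$ a short, integer-valued signature computed purely from DFS-native parameters --- a tuple combining the number of back edges crossing $e$ with a small number of extremal endpoints among those back edges --- so that, along a single ancestral path, two tree edges are compatible with a fixed candidate $f$ iff their signatures satisfy a relation that can be read off from $f$ in constant time. All signature coordinates are integers in $[1,n]$, which permits bucket-sorting the tree edges in $O(n+m)$ time using only arrays and linked lists. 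I then process the candidate back edges in a single scan of the buckets, emitting each type-$2$ cut exactly once.

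The main obstacle, and the part where the real work lies, is the correctness of the signature: it must be fine enough to rule out pairs $(e_1,e_2)$ whose crossing back-edge sets differ by more than one back edge, yet coarse enough that every legitimate pair falls into a bucket whose scan can be charged to the cuts it produces. I plan to prove this by following, along each root-to-leaf path of $T$, how the crossing back-edge set of a tree edge evolves one step at a time as back edges enter and leave, and by charging each emitted type-$2$ cut to the unique path segment between $e_1$ and $e_2$ on which exactly the back edge $f$ is exchanged. Assembling the type-$1$, type-$2$ and type-$3$ outputs across all $3$-edge-connected components then yields the complete list of $3$-edge cuts of $G$ in linear time, using only arrays, linked lists, stacks, and integer bucket-sort, all of which are valid primitives on a pointer machine.
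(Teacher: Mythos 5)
There is a genuine gap, and it sits exactly where the paper's actual work lies. First, you state that one DFS pass yields ``the $\mathit{low}$ and $\mathit{high}$ values.'' The $\mathit{low}$ values, yes; the $\mathit{high}$ values, no. Computing $\mathit{high}(v)$ for all $v$ is a path-maximum problem over the tree paths covered by back-edges, and the only known linear-time solution uses the Gabow--Tarjan static-tree disjoint-set-union structure, which requires the RAM model; on a pointer machine it costs an inverse-Ackermann factor. The entire point of the result you are asked to prove is to get a \emph{pointer-machine} linear bound, and the paper achieves this precisely by eliminating $\mathit{high}$ (and $\mathit{highD}$) from the type-$2$ computation, replacing the test $\mathit{high}(u)<v$ by new conditions phrased in terms of $l_2(M(v))$, $\mathit{low1}/\mathit{low2}$ of the children of $M(v)$, and a new parameter $\mathit{low_M}(v)$ with its own linear-time, DSU-free computation (Algorithm~\ref{algorithm:lowM}, Lemmas~\ref{lemma:identify_e_1}--\ref{lemma:find_u_from_v_Mlow2}). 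By assuming $\mathit{high}$ for free, your argument either silently leaves the pointer-machine model or leaves the hardest subproblem unsolved.

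Second, the type-$2$ case --- which you correctly identify as the crux --- is only sketched. The claim that there is a constant-size ``signature'' of each tree edge (back-edge count plus a few extremal endpoints) such that compatibility with a candidate back-edge $f$ is a constant-time relation is exactly the statement that needs proof, and you defer it (``I plan to prove this by following \dots''). The natural candidates for such a signature run into the same obstruction: verifying $B(u)\subseteq B(v)$ for an ancestor $v$ of $u$ is exactly what $\mathit{high}(u)<v$ encodes, so without a substitute (the paper's $\tilde{M}$, $M_{low1}$, $M_{low2}$, $\mathit{low_M}$ machinery, together with the characterizations of which back-edge $e$ can realize $B(v)=B(u)\sqcup\{e\}$ or $B(u)=B(v)\sqcup\{e\}$) the bucketing scheme has no proven correctness and no constant-time test to run inside each bucket. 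The surrounding scaffolding (reduction to $3$-edge-connected components, type-$1$ cuts from back-edge counts, the Nadara et al.\ contraction for type-$3$ cuts, bucket sorting by vertex order implemented by list traversal) is fine and matches the paper's framing, but as it stands the proposal restates the problem for type-$2$ cuts rather than solving it.
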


\begin{corollary}
\label{corollary:4-connected-components}
The $4$-edge-connected components of an undirected graph can be computed in linear time on a pointer machine.
\end{corollary}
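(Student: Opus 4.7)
The plan is to plug Theorem~\ref{theorem:3-cuts} into the standard reduction that already underlies the linear-time frameworks of \cite{Linear4ECC:ESA21} and \cite{Linear4ECC:Nadara:ESA21}. First, I would decompose $G$ into its connected components, then its 2-edge-connected components (by identifying and removing bridges via Tarjan's algorithm), and then its 3-edge-connected components (e.g., via Tsin's algorithm). Each of these stages is a classical DFS-based procedure that runs in linear time on a pointer machine and uses only elementary data structures. Since 4-edge connectivity refines 3-edge connectivity, it suffices to compute the 4-edge-connected components separately inside each 3-edge-connected component, working on the 3-edge-connected auxiliary graph obtained by appropriately contracting the rest of $G$.

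Within each such 3-edge-connected auxiliary graph $H$, I would invoke Theorem~\ref{theorem:3-cuts} to enumerate all 3-edge cuts of $H$ in linear time on a pointer machine. The 4-edge-connected components of $H$ are then the equivalence classes of the relation ``not separated by any 3-edge cut'', so it remains to compute this partition. To do so in linear time, I would reuse the compact representation of 3-edge cuts produced by Theorem~\ref{theorem:3-cuts} on the DFS tree $T$ of $H$: each type-1, type-2, or type-3 cut is described in terms of a constant number of edges of $T$, and the induced partition of $V(H)$ can be extracted by a constant number of DFS-style sweeps over $T$, exactly as in \cite{Linear4ECC:ESA21,Linear4ECC:Nadara:ESA21}. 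Summing linear bounds over the pipeline and over all 3-edge-connected components yields the claimed overall linear time.

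The main obstacle is ensuring that no stage of the pipeline quietly resorts to RAM-model primitives or to a sophisticated data structure. In previous work, the only such obstacle was the static-tree disjoint-set-union structure of Gabow and Tarjan used to handle type-2 cuts in linear time; Theorem~\ref{theorem:3-cuts} was designed to eliminate precisely this dependency. Once type-2 cuts are handled using only elementary pointer-machine primitives, each of the remaining ingredients—connected components, bridges, 2-edge cuts, enumeration of 3-edge cuts, and the final tree-based partition refinement—can be carried out on a pointer machine using standard DFS and pointer manipulations, giving the corollary.
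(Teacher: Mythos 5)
Your proposal is correct and follows essentially the same route the paper intends: the corollary is obtained by plugging Theorem~\ref{theorem:3-cuts} into the existing linear-time framework of \cite{Linear4ECC:ESA21} (with the reduction of \cite{Linear4ECC:Nadara:ESA21} for type-$3$ cuts), observing that every other stage of that pipeline --- the reduction to $3$-edge-connected auxiliary graphs and the extraction of the component partition from the computed cuts --- is already DFS-based and pointer-machine implementable, the Gabow--Tarjan DSU for type-$2$ cuts having been the sole obstruction.
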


\section{Depth-first search and related notions}
\label{sec:dfs}

In this section we introduce the parameters that are used in our algorithm, which are defined with respect to a depth-first search spanning tree.
Let $G=(V,E)$ be a connected undirected graph with $n$ vertices, which may have multiple edges.
Let $T$ be the spanning tree of $G$ provided by a depth-first search (DFS) of $G$~\cite{dfs:t}, with start vertex $r$.
A vertex \emph{$u$ is an ancestor of a vertex $v$} (\emph{$v$ is a descendant of $u$}) in $T$ if the tree path from $r$ to $v$ contains $u$.
Thus, we consider a vertex to be both an ancestor and a descendant of itself.
The edges in $T$ are called \emph{tree-edges}; the edges in $E \setminus T$ are called \emph{back-edges}, as their endpoints have ancestor-descendant relation in $T$.
We let $p(v)$ denote the parent of a vertex $v$ in $T$.
If $u$ is a descendant of $v$ in $T$, we denote the set of vertices of the simple tree path from $u$ to $v$ as $\mathit{T}[u,v]$. The expressions $\mathit{T}[u,v)$ and $\mathit{T}(u,v]$ have the obvious meaning (i.e., the vertex on the side of the parenthesis is excluded).
We identify vertices with their preorder number assigned during the DFS. Thus, if $v$ is an ancestor of $u$ in $T$, then $v \leq u$.
Let $T(v)$ denote the set of descendants of $v$, and let $\mathit{ND}(v)=|T(v)|$ denote the number of descendants of $v$.
Then, vertex $u$ is a descendant of $v$ (i.e., $u\in T(v)$) if and only if $v\leq u < v+\mathit{ND}(v)$~\cite{domin:tarjan}.

Whenever $(x,y)$ denotes a back-edge, we shall assume that $x$ is a descendant of $y$. We let $B(v)$ denote the set of back-edges $(x,y)$, where $x$ is a descendant of $v$ and $y$ is a proper ancestor of $v$. Thus, if we remove the tree-edge $(v,p(v))$, $T(v)$ remains connected to the rest of the graph through the back-edges in $B(v)$.
Furthermore, we have the following property:
\begin{property}(\cite{Linear4ECC:ESA21})
\label{property:Bv}
A connected graph $G$ is $2$-edge-connected if and only if $|B(v)|>0$, for every $v\neq r$. Furthermore, $G$ is $3$-edge-connected only if $|B(v)|>1$, for every $v\neq r$.
\end{property}

We let $\mathit{b\_count}(v)=|B(v)|$ denote the number of elements of $B(v)$.
Assume that $G$ is $3$-edge-connected, and let $v\neq r$ be a vertex of $G$. Then we have $\mathit{b\_count}(v)>1$, and therefore there are at least two back-edges in $B(v)$.
We define the \emph{first low point of $v$}, denoted by $\mathit{low1}(v)$, as the minimum vertex $y$ such that there exists a back-edge $(x,y)\in B(v)$.
Also, we let $\mathit{low1D}(v)$ denote $x$, i.e., a descendant of $v$ that is connected with $\mathit{low1}(v)$ via a back-edge.
(Notice that $\mathit{low1D}(v)$ is not uniquely determined.)
Furthermore, we define the \emph{second low point of $v$}, denoted by $\mathit{low2}(v)$, as the minimum vertex $y'$ such that there exists a back-edge $(x',y')\in B(v)\setminus\{(\mathit{low1D}(v),\mathit{low1}(v))\}$, and let $\mathit{low2D}(v)$ denote $x'$.
Similarly, we define the \emph{high point of $v$}, denoted by $\mathit{high}(v)$, as the maximum $y$ such that there exists a back-edge $(x,y)\in B(v)$.
We also let $\mathit{highD}(v)$ denote a descendant of $v$ that is connected with $\mathit{high}(v)$ via a back-edge.
We let $l_1(v)$ denote the smallest $y$ for which there exists a back-edge $(v,y)$, or $v$ if no such back-edge exists. (Thus, $\mathit{low1}(v)\leq l_1(v)$.)
Furthermore, we let $l_2(v)$ denote the smallest $y$ for which there exists a back-edge $(v,y)\neq (v,l_1(v))$, or $v$ is no such back-edge exists.
It is easy to compute all $l_1(v)$, $l_2(v)$, $\mathit{low1}(v)$, $\mathit{low1D}(v)$, $\mathit{low2}(v)$ and $\mathit{low2D}(v)$ during the DFS.
For the computation of all $\mathit{high}(v)$ (and $\mathit{highD}(v)$), \cite{Linear4ECC:ESA21} gave a linear-time algorithm that uses the static tree DSU data structure of Gabow and Tarjan~\cite{dsu:gt}.

In order to gather the connectivity information that is contained in the sets $B(v)$, we also have to consider the higher ends of the back-edges in $B(v)$. Thus we define the \emph{maximum point} $M(v)$ of $v$ as the maximum vertex $z$ such that $T(z)$ contains the higher ends of all back-edges in $B(v)$. In other words, $M(v)$ is the nearest common ancestor of all $x$ for which there exists a back-edge $(x,y)\in B(v)$.
(Clearly, $M(v)$ is a descendant of $v$.)
Let $m$ be a vertex and $v_1, \dotsc, v_k$ be all the vertices with $M(v_1) =\dotsc = M(v_k) = m$, sorted in decreasing order.
Observe that $v_{i+1}$ is an ancestor of $v_i$, for every $i\in\{1,\dotsc,k-1\}$, since $m$ is a common descendant of all $v_1,\dotsc,v_k$.
Then we have $M^{-1}(m)=\{v_1,\dotsc,v_k\}$, and we define $\mathit{nextM}(v_i)$ $:=$ $v_{i+1}$, for every $i\in\{1,\dotsc,k-1\}$, and $\mathit{prevM}(v_i)$ $:=$ $v_k$, for every $i\in\{2,\dotsc,k\}$. Thus, for every vertex $v$, $\mathit{nextM}(v)$ is the successor of $v$ in the decreasingly sorted list $M^{-1}(M(v))$, and $\mathit{prevM}(v)$ is the predecessor of $v$ in the decreasingly sorted list $M^{-1}(M(v))$.

Now let $v$ be a vertex and let $u_1,\dotsc,u_k$ be the children of $v$ sorted in non-decreasing order w.r.t. their $\mathit{low1}$ point. We let $c_i(v)$ be $u_i$, if $i\in\{1,\dotsc,k\}$, and $\emptyset$ if $i>k$. (Note that $c_i(v)$ is not uniquely determined, since some children of $v$ may have the same $\mathit{low1}$ point.)
Then we call $c_1(v)$ the $\mathit{low1}$ child of $v$, and $c_2(v)$ the $\mathit{low2}$ child of $v$. We let $\tilde{M}(v)$ denote the nearest common ancestor of all $x$ for which there exists a back-edge $(x,y)\in B(v)$ with $x$ a proper descendant of $M(v)$. We leave $\tilde{M}(v)$ undefined if no such proper descendant $x$ of $M(v)$ exists.
We also define $M_{low1}(v)$ as the nearest common ancestor of all $x$ for which there exists a back-edge $(x,y)\in B(v)$ with $x$ being a descendant of the $\mathit{low1}$ child of $M(v)$, and also define $M_{low2}(v)$ as the nearest common ancestor of all $x$ for which there exists a back-edge $(x,y)\in B(v)$ with $x$ a descendant of the $\mathit{low2}$ child of $M(v)$.
We leave $M_{low1}(v)$ (resp. $M_{low2}(v)$) undefined if no such proper descendant $x$ of the $\mathit{low1}$ (resp., $\mathit{low2}$) child of $M(v)$ exists.

The following list summarizes the concepts used by \cite{Linear4ECC:ESA21} that are defined on a DFS-tree, and can be computed in linear time (except $B(v)$, which we do not compute).
Refer to Figure~\ref{figure:dfs-example} for an illustration.

\definecolor{cblue}{rgb}{0.36, 0.54, 0.66}
\definecolor{lgray}{rgb}{0.77, 0.76, 0.82}
\definecolor{cgray}{rgb}{0.52, 0.52, 0.51}
\renewcommand{\labelitemi}{\textcolor{lgray}{$\blacktriangleright$}}
\begin{itemize}
\item $B(v):=\{(x,y)\in E\setminus T\mid x \mbox{ is a descendant of } v \mbox{ and } y \mbox{ is a proper ancestor of } v\}$.
\item $l_1(v):=\mathit{min}(\{y\mid \exists (v,y)\in{B(v)}\}\cup\{v\})$.
\item $l_2(v):=\mathit{min}(\{y\mid \exists (v,y)\in{B(v)\setminus\{(v,l_1(v)\}}\}\cup\{v\})$.
\item $\mathit{low1}(v):=\mathit{min}\{y\mid\exists (x,y)\in B(v)\}$.
\item $\mathit{low1D}(v):=$ a vertex $x$ such that $(x,\mathit{low1}(v))\in{B(v)}$.
\item $\mathit{low2}(v):=\mathit{min}\{y\mid\exists (x,y)\in B(v)\setminus\{(\mathit{low1D}(v),\mathit{low1}(v))\}\}$.
\item $\mathit{low2D}(v):=$ a vertex $x$ such that $(x,\mathit{low2}(v))\in{B(v)}$.
\item $\mathit{high}(v):=\mathit{max}\{y\mid \exists (x,y)\in{B(v)}\}$.
\item $\mathit{highD}(v):=$ a vertex $x$ such that $(x,\mathit{high}(v))\in{B(v)}$.
\item $M(v):=\mathit{nca}\{x\mid\exists (x,y)\in{B(v)}\}$.
\item $\tilde{M}(v):=\mathit{nca}\{x\mid\exists (x,y)\in{B(v)} \mbox{ and } x \mbox{ is a proper descendant of } M(v)\}$.
\item $M_{low1}(v):=\mathit{nca}\{x\mid\exists (x,y)\in{B(v)} \mbox{ and } x \mbox{ is a descendant of the } \mathit{low1} \mbox{ child of } M(v)\}$.
\item $M_{low2}(v):=\mathit{nca}\{x\mid\exists (x,y)\in{B(v)} \mbox{ and } x \mbox{ is a descendant of the } \mathit{low2} \mbox{ child of } M(v)\}$.
\item $\mathit{nextM}(v):=$ the maximum vertex $v'<v$ such that $M(v')=M(v)$.
\item $\mathit{prevM}(v):=$ the minimum vertex $v'>v$ such that $M(v')=M(v)$.
\end{itemize}

We note that the notion of low points plays central role in classic algorithms for computing the biconnected components~\cite{dfs:t}, the triconnected components~\cite{3-connectivity:ht}
and the $3$-edge-connected components~\cite{GI:ECtoVC,3-connectivity:ht,NagamochiIbaraki:3CC,Tsin:3CC} of a graph.
Hopcroft and Tarjan~\cite{3-connectivity:ht} also use a concept of high points, which, however, is different from ours.
Our goal is to provide an method to compute type-$2$ cuts that avoids the use of $\mathit{high}(v)$ and $\mathit{highD}(v)$.
We achieve this by introducing two new parameters.

\begin{figure*}[t!]
\begin{center}
\centerline{\includegraphics[trim={0 0 0 6.2cm}, clip=true, width=\textwidth]{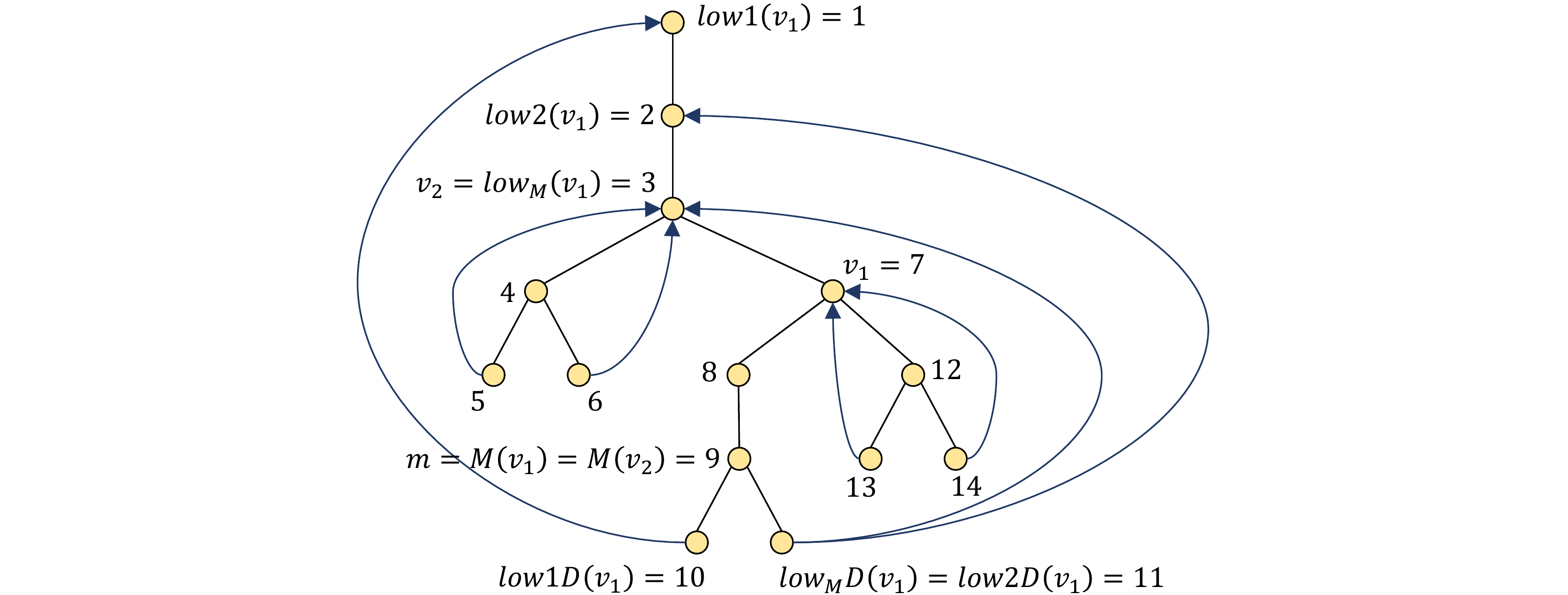}}
\caption{An illustration of the concepts defined on a depth-first search (DFS) spanning tree of an undirected graph.
Vertices are numbered in DFS order and back-edges are shown directed from descendant to ancestor in the DFS tree.
Vertices $v_1$ and $v_2$ have $M(v_1)=M(v_2)=m$, hence $\{v_1,v_2\} \subseteq M^{-1}(m)$, $\mathit{nextM}(v_1)=v_2$, and $\mathit{prevM}(v_2)=v_1$.
Moreover, $(11,3) \not\in B(v_2)$, so $\mathit{low_M}(v_1)=3$.
\label{figure:dfs-example}}
\end{center}
\vspace{-1cm}
\end{figure*}

\subsection{Two new key parameters}

Here we assume that the input graph $G$ is $3$-edge-connected. Let $v$ be a vertex such that $\mathit{nextM}(v)\neq\emptyset$. Then we have $B(\mathit{nextM}(v))\subset B(v)$. Thus we can define $\mathit{low_M}(v)$ as the lowest lower end of all back-edges in $B(v)\setminus{B(\mathit{nextM}(v))}$, and we let $\mathit{low_{M}D}(v)$ be a vertex such that $(\mathit{low_{M}D}(v),\mathit{low_M}(v))$ is a back-edge in $B(v)\setminus{B(\mathit{nextM}(v))}$. Formally, we have
\begin{itemize}
\item $\mathit{low_M}(v):=\mathit{min}\{y\mid\exists (x,y)\in B(v)\setminus B(\mathit{nextM}(v))\}$.
\item $\mathit{low_{M}D}(v):=$ a vertex $x$ such that $(x,\mathit{low_M}(v))\in B(v)$.
\end{itemize}

Now we describe how to compute $(\mathit{low_{M}D}(v),\mathit{low_M}(v))$ for every vertex $v$ such that $\mathit{nextM}(v)\neq\emptyset$.
To do this efficiently, we process the vertices in a bottom-up fashion. For every vertex $v$ that we process, we check whether $u=\mathit{prevM}(v)\neq\emptyset$. If that is the case, then $\mathit{low_M}(u)$ is defined and it lies on the simple tree-path $T(u,v]$. Thus we descend the path $T(u,v]$, starting from $v$, following the $\mathit{low1}$ children of the vertices on the path; for every vertex $y$ that we encounter we check whether there exists a back-edge $(x,y)$ with $x\in T(M(v))$. The first $y$ with this property is $\mathit{low_M}(u)$, and we set $(\mathit{low_{M}D}(u),\mathit{low_M}(u))\leftarrow(x,y)$.
To achieve linear running time, we let $\mathit{In}[y]$ denote the list of all vertices $x$ for which there exists an incoming back-edge to $y$ with higher end $x$. In other words, $\mathit{In}[y]$ contains all vertices $x$ for which there exists a back-edge $(x,y)$. Furthermore, we have the elements of $\mathit{In}[y]$ sorted in increasing order (this can be done easily in linear time with bucket-sort). When we process a vertex $y$ as we descend $T(u,v]$, during the processing of $v$, we traverse $\mathit{In}[y]$ starting from the element we accessed the last time we traversed $\mathit{In}[y]$ (or, if this is the first time we traverse $\mathit{In}[y]$, from the first element of $\mathit{In}[y]$). Thus, we need a variable $\mathit{currentBackEdge}[y]$ to store the element of $\mathit{In}[y]$ we accessed the last time we traversed $\mathit{In}[y]$. Now, for every $x\in\mathit{In}[y]$ that we meet, we check whether $x\in T(M(v))$. If that is the case, then we set $(\mathit{low_{M}D}(u),\mathit{low_M}(u))\leftarrow(x,y)$; otherwise, we move to the next element of $\mathit{In}[y]$. If we reach the end of $\mathit{In}[y]$, then we descend the path $T(u,v]$ by moving to the $\mathit{low1}$ child of $y$. In fact, if $\mathit{prevM}(c_1(y))\neq\emptyset$, then we may descend immediately to $\mathit{low_M}(\mathit{prevM}(c_1(y)))$. This ensures that $\mathit{In}[y]$ will not be accessed again. Algorithm \ref{algorithm:lowM} shows how to compute all pairs $(\mathit{low_{M}D}(v),\mathit{low_M}(v))$, for all vertices $v$ with $\mathit{nextM}(v)\neq\emptyset$, in total linear time.

\begin{algorithm}[h!]
\caption{\textsf{Calculate $(\mathit{low_{M}D}(v),\mathit{low_M}(v))$, for every vertex $v$ with $\mathit{nextM}(v)\neq\emptyset$}}
\label{algorithm:lowM}
\LinesNumbered
\DontPrintSemicolon
calculate $\mathit{In}[y]$, for every vertex $y$, and have its elements sorted in increasing order\;
\lForEach{vertex $y$}{$\mathit{currentBackEdge}[y]\leftarrow$ first element of $\mathit{In}[y]$}
\lForEach{vertex $v$}{$\mathit{low_M}[v]\leftarrow \emptyset$}
\For{$v\leftarrow n$ to $v\leftarrow 1$}{
  \lIf{$\mathit{prevM}(v)=\emptyset$}{\textbf{continue}}
  $u \leftarrow \mathit{prevM}(v)$\;
  $y \leftarrow v$\;
  \While{$\mathit{low_M}(u)=\emptyset$}{
    \While{$\mathit{currentBackEdge}[y]\neq\emptyset$}{
      \label{lowM:line_1}
      $x \leftarrow \mathit{currentBackEdge}[y]$\;
      \If{$x<M(u)$}{
        $\mathit{currentBackEdge}[y]\leftarrow$ next element of $\mathit{In}[y]$\;
      }
      \Else{
        \If{$x<M(u)+\mathit{ND}(M(u))$}{
          $(\mathit{low_{M}D}(u),\mathit{low_M}(u))\leftarrow(x,y)$\;
        }
        \textbf{break}\;
      }
    }
    \If{$\mathit{low_M}(u)=\emptyset$}{
      \lIf{$\mathit{prevM}(c_1(y))=\emptyset$}{$y\leftarrow c_1(y)$}
      \lElse{$y\leftarrow\mathit{low_M}(\mathit{prevM}(c_1(y)))$}
    }
  }
}
\end{algorithm}

\begin{proposition}
Algorithm \ref{algorithm:lowM} correctly computes all pairs $(\mathit{low_{M}D}(v),\mathit{low_M}(v))$, for all vertices $v$ with $\mathit{nextM}(v)\neq\emptyset$, in total linear time.
\end{proposition}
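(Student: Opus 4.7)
The plan is to establish correctness and the linear-time bound simultaneously by induction on the outer loop iteration, i.e., on $v$ processed in decreasing preorder, so that every pair $(\mathit{low_{M}D}(z),\mathit{low_M}(z))$ needed by later jumps is already available. Fix the current iteration with $v$ and $u=\mathit{prevM}(v)$, and set $m:=M(v)=M(u)$. Unpacking the definition, $\mathit{low_M}(u)=\min\{y:\exists (x,y)\in B(u)\setminus B(v)\}$, and a direct check shows that every such back-edge satisfies $y\in T[v,u)$ together with $x\in T(M(u))$. So the task reduces to locating the smallest $y$ on $T[v,u)$ that admits a back-edge $(x,y)$ with $x\in T(M(u))$ and reporting such a pair.

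Next I would prove the structural lemma that this path is traced by the $\mathit{low1}$-child function: writing $v=w_0,w_1,\dots,w_t=u$ for the consecutive vertices of $T[v,u]$, one has $c_1(w_i)=w_{i+1}$ for $0\le i<t$. Since $T(m)\subseteq T(w_{i+1})$ and every proper ancestor of $v$ is a proper ancestor of $w_{i+1}$, we get $B(w_{i+1})\supseteq B(v)$ and hence $\mathit{low1}(w_{i+1})\le \mathit{low1}(v)<v$. Conversely, for any sibling $c\neq w_{i+1}$ of $w_{i+1}$ under $w_i$, a back-edge from $T(c)$ to a proper ancestor of $v$ would belong to $B(v)$ but have higher end in $T(c)$, which is disjoint from $T(m)$, contradicting $M(v)=m$; hence $\mathit{low1}(c)\ge v>\mathit{low1}(w_{i+1})$. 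Combining this with the observation that the test $M(u)\le x<M(u)+\mathit{ND}(M(u))$ exactly encodes $x\in T(M(u))$, and that $\mathit{In}[y]$ is sorted by $x$ (so the break on $x\ge M(u)$ outside the range is safe), we conclude that a naive $y\leftarrow c_1(y)$ descent from $y=v$ visits $T[v,u)$ in strictly increasing preorder and returns the correct minimum.

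The delicate step is the optimization $y\leftarrow \mathit{low_M}(\mathit{prevM}(c_1(y)))$, which collapses a long subpath into $O(1)$ work. Letting $w_{i+1}=c_1(y)$ and $z=\mathit{prevM}(w_{i+1})$, we have $w_{i+1}>v$, so the iteration for $v^*=w_{i+1}$ has already run and, by the induction hypothesis, set $\mathit{low_M}(z)$ to the smallest $y'\in T[w_{i+1},z)$ having a back-edge with higher end in $T(M(w_{i+1}))$. Now $B(v)\subseteq B(w_{i+1})$ forces, via the NCA property of $M$, that $M(w_{i+1})$ is an ancestor of $m$, so $T(m)\subseteq T(M(w_{i+1}))$. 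Consequently any back-edge relevant to the current iteration whose lower end lies in the skipped segment $T[w_{i+1},\mathit{low_M}(z))$ would have been relevant to the earlier iteration too, contradicting the minimality of $\mathit{low_M}(z)$. So the jump is sound and the algorithm continues to visit candidate $y$'s in increasing preorder, returning the correct minimum.

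For the time bound, building and bucket-sorting $\mathit{In}[y]$ is $O(n+m)$; each back-edge is inspected $O(1)$ times over the whole execution because $\mathit{currentBackEdge}[y]$ advances monotonically, giving $O(m)$ for the inner-loop scans; the rest of the cost is the tree descent, where every jump replaces an entire subpath with $O(1)$ work. An amortized argument against the nested $M^{-1}$-chains then yields $O(n)$ total descent work, so the whole procedure runs in $O(n+m)$. The main obstacle I expect is precisely this last point: giving a clean amortization showing that each tree edge contributes only $O(1)$ units across all iterations, together with the parallel invariant that the monotone pointer $\mathit{currentBackEdge}[y]$ never advances past a back-edge still needed by a later iteration. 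Both require a careful bookkeeping that simultaneously exploits the bottom-up processing order, the inclusion $B(v)\subseteq B(w_{i+1})$, and the nested structure of the chains $M^{-1}(m)$, and this is the technical heart of the argument.
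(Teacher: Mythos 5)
Your overall plan mirrors the paper's proof: induction on decreasing $v$, the reduction of $\mathit{low_M}(u)$ (for $u=\mathit{prevM}(v)$) to finding the shallowest vertex of $T(u,v]$ receiving a back-edge from $T(M(u))$, the lemma that this path is traced by $\mathit{low1}$ children, and the soundness of the jump $y\leftarrow\mathit{low_M}(\mathit{prevM}(c_1(y)))$ via the minimality of the previously computed $\mathit{low_M}$ value together with the nesting $T(M(v))\subseteq T(M(c_1(y)))$. Those parts are correct. But the two items you defer in your closing paragraph are not loose ends to be tidied later; they are precisely the remaining substance of the paper's proof, so as written your argument has a genuine gap.

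Concretely: (i) your claim that the descent "returns the correct minimum" tacitly treats the scan of $\mathit{In}[y]$ as if it started at the head of the list, whereas it starts at $\mathit{currentBackEdge}[y]$, which earlier iterations may already have advanced and which is never reset. Your remark about the sorted order and the break when $x\geq M(u)+\mathit{ND}(M(u))$ only covers elements at or after the pointer; what must be proved is that no element of $T(M(u))$ lies strictly \emph{before} the pointer when $y$ is visited for the current $u$. The paper closes this by strengthening the induction hypothesis (the recorded $\mathit{low_{M}D}(u')$ is the \emph{smallest} descendant of $M(u')$ with a back-edge into $\mathit{low_M}(u')$) and then arguing that if the pointer at $y$ was left at $x'$ by an earlier iteration for $u'$ with $\mathit{nextM}(u')=v'>v$, then $y$ being a common descendant of $v$ and $v'$ gives $B(v)\subseteq B(v')$, hence $M(u')$ is an ancestor of $M(u)$ and so $x'\leq x$ for the sought $x$; alternatively one can show that every element the pointer ever skipped satisfied $x'<M(u')\leq M(u)$ at the moment it was skipped. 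You name this invariant as "the technical heart" but do not prove it. (ii) The linear time bound: charging pointer advances to back-edges handles the inner loop, but the total number of vertex visits during the descents still needs a proof; "an amortized argument against the nested chains then yields $O(n)$" is an assertion, which you yourself flag as open. The paper's (terse) argument is the observation that, because of the jump rule, no vertex of $T(\mathit{low_M}(u),v)$ is ever traversed again in a later iteration — a later descent about to enter this segment does so at a vertex whose $\mathit{low1}$ child is $v$, sees $\mathit{prevM}(v)=u\neq\emptyset$, and jumps directly to $\mathit{low_M}(u)$ (or deeper) — so interior vertices are visited once and each iteration adds only $O(1)$ revisited boundary vertices. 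Supplying proofs of (i) and (ii) is exactly what is needed to turn your proposal into a proof of the proposition.
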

\begin{proof}
Let $v$ be a vertex with $\mathit{prevM}(v)=u\neq\emptyset$. We will prove inductively that $(\mathit{low_{M}D}(u),\mathit{low_M}(u))$ will be computed correctly, and that $\mathit{low_{M}D}(u)$ will be the lowest vertex which is a descendant of $M(u)$ such that $(\mathit{low_{M}D}(u),\mathit{low_M}(u))$ is a back-edge. So let us assume that we have run Algorithm \ref{algorithm:lowM} and we have correctly computed all pairs $(\mathit{low_{M}D}(u'),\mathit{low_M}(u'))$, for all vertices $v'>v$ with $\mathit{prevM}(v')=u'\neq\emptyset$, and $\mathit{low_{M}D}(u')$ is the lowest vertex in $T(M(u'))$ such that $(\mathit{low_{M}D}(u'),\mathit{low_M}(u'))$ is a back-edge. Suppose also that we have currently descended the path $T(u,v]$, we have reached $y$, and $\mathit{low_M}(v)\geq y$.

Let us assume, first, that $\mathit{low_M}(v)=y$, and let $(x,y)$ be the back-edge such that $x\in T(M(v))$ and $x$ is minimal with this property. The \textbf{while} loop in line \ref{lowM:line_1} will search the list of incoming back-edges to $y$, starting from $\mathit{currentBackEdge}[y]$. If $\mathit{currentBackEdge}[y]$ is the first element of $\mathit{In}[y]$, then is it certainly true that $x$ will be found. Otherwise, let $x'=\mathit{currentBackEdge}[y]$. Due to the inductive hypothesis, we have that $(x',y)=(\mathit{low_{M}D}(u'),\mathit{low_M}(u'))$, for a vertex $u'$ with $\mathit{nextM}(u')=v'>v$. Then, $y$ is in $T(u',v']$, but also in $T(u,v]$, and thus it is a common descendant of $v$ and $v'$. This means that $v$ and $v'$ are related as ancestor and descendant. In particular, since $v'>v$, we have that $v$ is an ancestor of $v'$. Furthermore, since $y$ is an ancestor of $u$, it is also an ancestor of $M(u)=M(v)$; therefore, since $v'$ is an ancestor of $y$, it is also an ancestor of $M(v)$. Since $v$ is an ancestor of $v'$, this implies that $M(v')$ is an ancestor of $M(v)$. Since $M(v')=M(u')$ and $M(v)=M(u)$, we thus have that $M(u')$ is an ancestor of $M(u)$, and therefore $M(u')\leq M(u)$. Thus, since $x'$ is the lowest descendant of $M(u')$ such that $(x',y)$ is a back-edge, and $x$ is the lowest descendant of $M(u)$ such that $(x,y)$ is a back-edge, we have $x'\leq x$. This shows that $x$ will be accessed during the \textbf{while} loop in line \ref{lowM:line_1}.

Now let us assume that $\mathit{low_M}(v)\neq y$. This means that $\mathit{low_M}(u)$ is greater than $y$, and we have to descend the path $T(u,y)$ to find it. First, let $c$ be the child of $y$ in the direction of $u$. Then we have $\mathit{low1}(c)<v$ (since $M(v)=M(u)$ is a descendant of $u$, and therefore a descendant of $c$, and we have $\mathit{low1}(M(v))<v$). If there was another child $c'$ of $y$ with $\mathit{low1}(c')<v$, this would imply that $M(v)=y$, which is absurd, since $y$ is a proper ancestor of $u$, and therefore a proper ancestor of $M(u)=M(v)$. This means that $c$ is the $\mathit{low1}$ child of $v$, and thus we may descend to $c_1(y)=y'$. Now we have $\mathit{low_M}(u)\geq y'$. If $\mathit{prevM}(y')=\emptyset$, then we simply traverse the list of incoming back-edges to $y'$, in line \ref{lowM:line_1}, and repeat the same process. Otherwise, let $u'=\mathit{prevM}(y')$. Due to the inductive hypothesis, we know that $\mathit{low_M}(u')$ has been computed correctly. Since $y'$ is an ancestor of $u$, it is also an ancestor of $M(u)=M(v)$. Furthermore, $y'$ is a descendant of $v$. Thus, $M(y')$ is an ancestor of $M(v)$, and therefore $M(u')$ is an ancestor of $M(u)$ (since $M(y')=M(u')$ and $M(v)=M(u)$). This means that $u'$ is an ancestor of $M(u)$. Now we see that $\mathit{low_M}(u)$ lies on $T(u,\mathit{low_M}(u')]$. (For otherwise, $(\mathit{low_{M}D}(u),\mathit{low_M}(u))$ would be a back-edge in $B(u')$ with $\mathit{low_M}(u)\geq y'=\mathit{nextM}(u')$ and $\mathit{low_M}(u)<\mathit{low_M}(u')$, contradicting the minimality of $\mathit{low_M}(u')$). Thus we may descend immediately to $\mathit{low_M}(u')$. Then we traverse the list of incoming back-edges to $\mathit{low_M}(u')$, in line \ref{lowM:line_1}, and repeat the same process. Eventually we will reach $\mathit{low_M}(u)$ and have it computed correctly. It should be clear that no vertex on the path $T(\mathit{low_M}(u),v)$ will be traversed again, and this ensures the linear complexity of Algorithm~\ref{algorithm:lowM}.
\end{proof}

\section{Simple algorithm for computing all $3$-cuts of type-$2$}
\label{sec:simple-type2}

In this section we will show how to compute all $3$-cuts of type-$2$ (consisting of two tree-edges and one back-edge) of a $3$-edge-connected graph in linear time, without using the $\mathit{high}$ points of  \cite{Linear4ECC:ESA21}.
We use the following characterization of such cuts.

\begin{lemma}(\cite{Linear4ECC:ESA21})
\label{lemma:type-2-cut}
Let $u,v$ be two vertices with $v<u$. Suppose that $\{(u,p(u)),(v,p(v)),e\}$ is a $3$-cut, where $e$ is a back-edge.
Then $v$ is an ancestor of $u$, and either $(1)$ $B(v)=B(u)\sqcup \{e\}$ or $(2)$ $B(u)=B(v)\sqcup \{e\}$. Conversely, if there exists a back-edge $e$ such that $(1)$ or $(2)$ is true, then $\{(u,p(u)),(v,p(v)),e\}$ is a $3$-cut.
\end{lemma}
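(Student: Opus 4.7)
The plan is to reason about the three vertex regions determined by removing the two tree edges $(u,p(u))$ and $(v,p(v))$, count the surviving back-edges crossing each pair of regions, and then apply Property~\ref{property:Bv} (which in a $3$-edge-connected graph gives $|B(w)|\geq 2$ for every $w\neq r$) to pin down the exact way in which $B(u)$ and $B(v)$ can differ.

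For the forward direction, I would first handle the claim that $v$ must be an ancestor of $u$ by contradiction. If $u$ and $v$ are unrelated in $T$, then $T(u)$ and $T(v)$ are disjoint, and since the ancestor endpoints of the back-edges in $B(u)$ (resp.\ $B(v)$) are proper ancestors of $u$ (resp.\ $v$) and therefore lie outside $T(v)$ (resp.\ $T(u)$), the sets $B(u)$ and $B(v)$ are themselves disjoint. After deleting the two tree edges, $T(u)$ is attached to the rest of $G$ only through $B(u)$ and $T(v)$ only through $B(v)$, each of size at least $2$. Removing a single back-edge $e$ can shrink at most one of these two sets by one, so both subtrees remain attached, contradicting that $\{(u,p(u)),(v,p(v)),e\}$ is a cut.

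Having placed $v$ as a proper ancestor of $u$, I would partition $V$ into the three nonempty regions $R_1=T(u)$, $R_2=T(v)\setminus T(u)$, and $R_3=V\setminus T(v)$. A direct classification of back-edges by which region contains each endpoint shows that the back-edges crossing $R_1$--$R_2$ are exactly $B(u)\setminus B(v)$, those crossing $R_1$--$R_3$ are exactly $B(u)\cap B(v)$, and those crossing $R_2$--$R_3$ are exactly $B(v)\setminus B(u)$. Writing $a$, $b$, $c$ for the three sizes, we have $a+b=|B(u)|\geq 2$ and $b+c=|B(v)|\geq 2$. For $\{(u,p(u)),(v,p(v)),e\}$ to actually disconnect $G$, some $R_i$ must end up isolated after $e$ is removed; isolating $R_1$ would force $a+b\leq 1$ and isolating $R_3$ would force $b+c\leq 1$, both ruled out by Property~\ref{property:Bv}. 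So $R_2$ must be isolated, meaning $a=c=0$ after the removal of $e$. A three-way case split on whether $e$ lies in $B(u)\setminus B(v)$, $B(u)\cap B(v)$, or $B(v)\setminus B(u)$ then shows that the middle sub-case would already make $\{(u,p(u)),(v,p(v))\}$ a $2$-cut (impossible in a $3$-edge-connected graph), while the other two sub-cases yield precisely $B(u)=B(v)\sqcup\{e\}$ and $B(v)=B(u)\sqcup\{e\}$ respectively.

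The converse direction should then be immediate: in either of cases (1) or (2), the decomposition above has $a=c=0$ after removing $e$, so $R_2$ becomes an isolated component and the three edges form a cut of size $3$. I do not expect a substantive obstacle beyond the region-versus-back-edge bookkeeping and the careful application of Property~\ref{property:Bv} to both $u$ and $v$ to eliminate the degenerate configurations.
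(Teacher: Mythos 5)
The paper does not actually prove this lemma---it is imported verbatim from \cite{Linear4ECC:ESA21}---so there is no in-paper proof to compare against; taken on its own, your argument is correct, and the decomposition into $R_1=T(u)$, $R_2=T(v)\setminus T(u)$, $R_3=V\setminus T(v)$ with the crossing back-edge sets identified as $B(u)\setminus B(v)$, $B(u)\cap B(v)$, $B(v)\setminus B(u)$, combined with Property~\ref{property:Bv}, is the natural way to obtain this characterization. Two small points to make explicit in a write-up: the case ``$u$ is an ancestor of $v$'' is already excluded by the preorder convention $v<u$ (so only ``unrelated'' needs refuting), and your three-way split on the location of $e$ should also admit the possibility that $e$ has both endpoints inside a single region (i.e.\ $e\notin B(u)\cup B(v)$), which is killed exactly like your middle sub-case: removing such an $e$ leaves the quotient on the three regions unchanged, so isolating $R_2$ would force $\{(u,p(u)),(v,p(v))\}$ to be a $2$-cut, contradicting $3$-edge-connectivity.
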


In the following, for any vertex $u$, $V(u)$ denotes the set of all vertices $v$ that are ancestors of $u$ and such that $B(v)=B(u)\sqcup\{e\}$, for a back-edge $e$. Similarly, for any vertex $v$, $U(v)$ denotes the set of all vertices $u$ that are descendants of $v$ and such that $B(u)=B(v)\sqcup\{e\}$, for a back-edge $e$.
In \cite{Linear4ECC:ESA21} it is shown that  $V(u)\cap V(u')=\emptyset$ (resp. $U(v)\cap U(v')=\emptyset$) for every two vertices $u\neq u'$ (resp. $v\neq v'$).
Thus, in order to find all type-$2$ cuts, it is sufficient to find, for every vertex $u$ (resp. $v$) the unique vertex $v$ (resp. $u$), if it exists, such that $u\in U(v)$ (resp. $v\in V(u)$), and then identify the back-edge $e$ such that $\{(u,p(u)),(v,p(v)),e\}$ is a $3$-cut. The following two lemmas show how to identify $e$. 

\begin{lemma}
\label{lemma:identify_e_1}
Let $u,v$ be two vertices such that $u$ is a descendant of $v$ and $B(v)=B(u)\sqcup\{e\}$, for a back-edge $e=(x,y)$. Then we have $M(u)\in\{\tilde{M}(v),M_{low1}(v),M_{low2}(v)\}$. In particular, we have that either $(1)$ $M(u)=\tilde{M}(v)$ and $e=(M(v),l_1(M(v)))$, or $(2)$ $M(u)=M_{low1}(v)$ and $e=(M_{low2}(v),l_1(M_{low2}(v)))$, or $(3)$ $M(u)=M_{low2}(v)$ and $e=(M_{low1}(v),l_1(M_{low1}(v)))$.
\end{lemma}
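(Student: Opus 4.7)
The plan is to pin down the position of the higher endpoint $x$ of $e$ relative to $M(v)$ and its children, and then identify $e$ in each of the resulting subcases. I start by locating $u$ inside $T(M(v))$. Since $G$ is $3$-edge-connected, Property~\ref{property:Bv} gives $|B(u)|>1$, so $B(u)\subseteq B(v)$ contains at least one back-edge whose higher endpoint lies in $T(u)\cap T(M(v))$; in particular $u$ and $M(v)$ are related in $T$. If $u$ were $M(v)$ or a proper ancestor of $M(v)$, then $x\in T(M(v))\subseteq T(u)$ together with $y$ being a proper ancestor of $v$ (hence of $u$) would force $e\in B(u)$, contradicting $B(v)=B(u)\sqcup\{e\}$. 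So $M(v)$ is a proper ancestor of $u$, and there is a unique child $c$ of $M(v)$ with $u\in T(c)$.

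The first branch is $x=M(v)$. Here the higher endpoints of $B(u)=B(v)\setminus\{e\}$ are exactly the higher endpoints of $B(v)$ that are proper descendants of $M(v)$, whose nearest common ancestor is $\tilde{M}(v)$ by definition; hence $M(u)=\tilde{M}(v)$. To identify $e$, observe that $M(v)\notin T(u)$, so any back-edge from $M(v)$ to a proper ancestor of $v$ must lie in $B(v)\setminus B(u)=\{e\}$ and is therefore unique. Since $e$ itself witnesses $l_1(M(v))\leq y<v$, no back-edge $(M(v),y'')$ with $y''\geq v$ can be smaller; hence $e=(M(v),l_1(M(v)))$.

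The main obstacle is the second branch, where $x$ is a proper descendant of $M(v)$ lying in $T(c')$ for some child $c'$ of $M(v)$ with $c'\neq c$ (the inequality is forced because $x\notin T(u)$). I must show $\{c,c'\}=\{c_1(M(v)),c_2(M(v))\}$. The key observation is that for every child $\hat{c}$ of $M(v)$ with $\mathit{low1}(\hat{c})<v$, the back-edge realizing $\mathit{low1}(\hat{c})$ has higher endpoint in $T(\hat{c})$ and lower endpoint a proper ancestor of $v$, so it belongs to $B(v)$. But every back-edge in $B(v)$ has higher endpoint in $T(c)\cup T(c')$, hence $\hat{c}\in\{c,c'\}$. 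If $c$ (or $c'$) were $c_i(M(v))$ with $i\geq 3$, then $\mathit{low1}(c_1(M(v)))\leq\mathit{low1}(c_2(M(v)))\leq\mathit{low1}(c)<v$ would give two more children of $M(v)$ with $\mathit{low1}<v$ outside $\{c,c'\}$, a contradiction. So $\{c,c'\}=\{c_1(M(v)),c_2(M(v))\}$.

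Given this structural fact, the identification is direct. In the subcase $c=c_1(M(v))$ and $c'=c_2(M(v))$, the higher endpoints of $B(v)$ that lie in $T(c_1(M(v)))$ are precisely those of $B(u)$, so $M_{low1}(v)=M(u)$, while the only higher endpoint of $B(v)$ in $T(c_2(M(v)))$ is $x$, giving $M_{low2}(v)=x$. The equality $l_1(x)=y$ then follows by the same uniqueness argument as before applied to $x$: any back-edge $(x,y'')$ with $y''<v$ lies in $B(v)\setminus B(u)=\{e\}$, while any back-edge $(x,y'')$ with $y''\geq v$ satisfies $y''\geq v>y$. Hence $e=(M_{low2}(v),l_1(M_{low2}(v)))$. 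The subcase $c=c_2(M(v))$, $c'=c_1(M(v))$ is symmetric and yields the third alternative of the lemma.
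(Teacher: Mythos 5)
Your overall strategy is sound and it does reach exactly the three alternatives of the lemma, but one pivotal step is justified by a non sequitur. In the second branch you let $c$ be the child of $M(v)$ with $u\in T(c)$ and $c'$ the child with $x\in T(c')$, and you assert $c'\neq c$ ``because $x\notin T(u)$''. That reason does not suffice: $x\notin T(u)$ only excludes the subtree of $u$, while $x$ could a priori lie in $T(c)\setminus T(u)$ (on the tree path from $c$ to $u$, or in a sibling branch inside $T(c)$), and then $T(u)\subseteq T(c)$ gives no contradiction. The claim itself is true, but it needs the defining property of $M(v)$: if $x$ were in $T(c)$, then every back-edge of $B(v)=B(u)\sqcup\{e\}$ would have its higher end in $T(c)$ (those of $B(u)$ lie in $T(u)\subseteq T(c)$), so the maximum vertex whose subtree contains all higher ends of $B(v)$ would be a descendant of $c$, contradicting $M(v)=p(c)$. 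This two-line repair matters, because $c'\neq c$ is precisely what later gives $\{c,c'\}=\{c_1(M(v)),c_2(M(v))\}$ and the disjointness of $T(c_1(M(v)))$ and $T(c_2(M(v)))$ on which your identification of $M(u)$ and of $e$ rests. (A cosmetic slip in the same place: when $c=c_i(M(v))$ with $i\geq 3$ and $c'\in\{c_1(M(v)),c_2(M(v))\}$, you get only one extra child with $\mathit{low1}<v$ outside $\{c,c'\}$, not two; one already suffices for the contradiction.)

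Apart from this, your route is a correct and slightly different organization of the paper's proof. The paper first shows only that $M(v)$ is a proper ancestor of $M(u)$, proves $\mathit{low1}(c_k(M(v)))\geq v$ for all $k>2$ by a contradiction in which $M(u)$ would cover two children of $M(v)$, and then argues through $B(u)\cap B(c_1(M(v)))$ and $B(u)\cap B(c_2(M(v)))$. You instead use $3$-edge-connectivity (Property~\ref{property:Bv}) to get $B(u)\neq\emptyset$ and conclude the stronger fact that $M(v)$ is a proper ancestor of $u$ itself; working with the child $c$ containing $u$ then lets you identify $M(u)$ with $\tilde{M}(v)$, $M_{low1}(v)$ or $M_{low2}(v)$ by a direct comparison of higher-endpoint sets, and to pin down $e$ via the uniqueness of back-edges in $B(v)\setminus B(u)$. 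Once the $c'\neq c$ step is patched as above, this is a transparent and complete argument.
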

\begin{proof}
First we will show that $M(v)$ is a proper ancestor of $M(u)$. Obviously, $M(v)$ is an ancestor of $M(u)$, since $B(v)=B(u)\sqcup\{e\}$. Furthermore, since $e\in B(v)$, $x$ is a descendant of $M(v)$, and $y$ is a proper ancestor of $v$, and therefore a proper ancestor of $u$. Thus, it cannot be the case that $M(u)=M(v)$, for otherwise we would have $e\in B(u)$. This shows that $M(v)$ is a proper ancestor of $M(u)$.
Now we will show that $\mathit{low1}(c_k(M(v)))\geq v$, for every $k>2$. Suppose for the sake of contradiction, that there exists a $k>2$ such that $\mathit{low1}(c_k(M(v)))<v$. Then we have $\mathit{low1}(c_k'(M(v)))<v$, for every $k'\leq k$, and so $B(v)\cap B(c_k'(M(v)))\neq\emptyset$, for every $k'\leq k$. Then, since $B(u)=B(v)\setminus\{e\}$, we have that $B(u)\cap B(c_k'(M(v)))\neq\emptyset$, for all $k'\leq k$, except possibly a $\tilde{k}\in\{1,\dotsc,k\}$. Thus, $M(u)$ is a common ancestor of all $\{c_1(M(v)),\dotsc,c_k(M(v))\}$, except possibly $c_{\tilde{k}}(M(v))$, and so, since $k>2$, we conclude that $M(u)$ is an ancestor of $M(v)$, which is absurd. Thus we have demonstrated that $\mathit{low1}(c_k(M(v)))\geq v$, for every $k>2$.

Now, there are two cases to consider: either $x=M(v)$, or $x$ is a descendant of a child of $M(v)$. First take the case $x=M(v)$. Then $(M(v),l_1(M(v)))$ is obviously a back-edge in $B(v)$. Furthermore, since $M(u)$ is not an ancestor of $M(v)$, we also have $(M(v),l_1(M(v)))\notin B(u)$. Thus $e=(M(v),l_1(M(v)))$. Since every other back-edge of the form $(M(v),y')$ with $y'<v$ must have $(M(v),y')\in B(v)\setminus B(u)$, we conclude that $e$ is the unique back-edge of the form $(M(v),y')$ with $y'<v$. Since $B(u)=B(v)\setminus\{e\}$, this means that $M(u)=\mathit{nca}(\{x'\mid \exists (x',y')\in B(v)\}\setminus\{M(v)\})=\tilde{M}(v)$.
Now consider the case that $x$ is a descendant of a child of $M(v)$. Then we have that $x$ is either a descendant of $c_1(M(v))$, or a descendant of $c_2(M(v))$ (since $\mathit{low1}(c_k(M(v)))\geq v$, for every $k>2$). We will consider only the case that $x$ is a descendant of $c_1(M(v))$, since the other case can be treated in a similar manner. So let $x$ be a descendant of $c_1(M(v))$. Then we must have $\mathit{low1}(c_2(M(v)))<v$, for otherwise there would exist a back-edge $e'$ of the form $(M(v),y')$ with $y'<v$, and so we would have two distinct back-edges $e,e'\in B(v)\setminus B(u)$, which is absurd. Thus, $B(v)\cap B(c_2(M(v)))\neq\emptyset$, and therefore, since $B(u)=B(v)\setminus\{e\}$, we have $B(u)\cap B(c_2(M(v)))\neq\emptyset$. Thus, we must necessarily have $B(u)\cap B(c_1(M(v)))=\emptyset$, for otherwise $M(u)$ would be an ancestor of both $c_1(M(v))$ and $c_2(M(v))$, and therefore an ancestor of $M(v)$, which is absurd. Since $B(v)=B(u)\sqcup\{e\}$ and $\mathit{low1}(c_1(M(v)))<v$, this means that $B(v)\cap B(c_1(M(v)))=\{e\}$, and therefore $e=(M_{low1}(v),l_1(M_{low1}(v)))$.
\end{proof}

\begin{lemma}
\label{lemma:identify_e_2}
Let $u,v$ be two vertices such that $u$ is a descendant of $v\neq r$ and $B(u)=B(v)\sqcup\{e\}$, for a back-edge $e=(x,y)$. Then we have $M(v)\in\{M(u),\tilde{M}(u),M_{low1}(u)\}$. In particular, we have that either $(1)$ $M(v)=M(u)$ and $e=(\mathit{low_M{D}}(u),\mathit{low_M}(u))$, or $(2)$ $M(v)=\tilde{M}(u)$ and $e=(M(u),l_1(M(u)))$, or $(3)$ $M(v)=M_{low1}(u)$ and $e=(M_{low2}(u),l_1(M_{low2}(u)))$.
\end{lemma}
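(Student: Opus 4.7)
My plan is to follow the structure of Lemma~\ref{lemma:identify_e_1} and classify the situation by where the higher endpoint $x$ of $e$ sits with respect to $M(u)$ and the children of $M(u)$. The starting observation is that the hypothesis $B(u) = B(v) \sqcup \{e\}$ forces every higher endpoint of a back-edge in $B(v)$ to lie in $T(u)$ (because it is also a higher endpoint of some back-edge in $B(u)$), so $M(v)$ is a descendant of $M(u)$; moreover, the sets of higher endpoints of $B(u)$ and $B(v)$ differ only by $\{x\}$, hence $M(u) = \mathit{nca}(M(v), x)$. This already dictates that exactly one of the following occurs: (i) $x$ is a descendant of $M(v)$, equivalently $M(v)=M(u)$; (ii) $x=M(u)$; or (iii) $x$ lies in $T(c')$ for some child $c'$ of $M(u)$ distinct from the unique child $c$ of $M(u)$ with $M(v) \in T(c)$. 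I would match these three possibilities to alternatives $(1)$, $(2)$, $(3)$ of the statement.

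For case (i), I would first note that for every vertex $w$ on the tree path $T[u,v]$ the inclusions $B(v) \subseteq B(w) \subseteq B(u)$ hold and $|B(u) \setminus B(v)|=1$, so $B(w) \in \{B(v), B(u)\}$; combined with $M(v)=M(u)$ this gives $M(w)=M(u)$ for every such $w$, hence $p(u) \in M^{-1}(M(u))$ and $\mathit{nextM}(u)=p(u)$. The crucial step is the use of $3$-edge-connectivity: if $y$ were a strict ancestor of $p(u)$, then $e \in B(p(u))$ and therefore $B(p(u))=B(u)$, so that removing $(u,p(u))$ and $(p(u),p(p(u)))$ would disconnect $T(p(u)) \setminus T(u)$ from the rest of $G$, contradicting the $3$-edge-connectivity of $G$. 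Hence $y=p(u)$, which yields $B(u) \setminus B(\mathit{nextM}(u)) = \{e\}$ and thus $e = (\mathit{low_{M}D}(u), \mathit{low_M}(u))$.

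For case (ii), $x=M(u)$, I would show $l_1(M(u))=y$: a back-edge $(M(u), y'')$ with $y'' < y$ would have $y''$ a strict ancestor of $u$ (since $y''<y<u$); being in $B(u)$ but distinct from $e$, it would lie in $B(v)$, forcing $y''<v$ and placing $M(u)$ among the higher endpoints of $B(v)$, which makes $M(v)=M(u)$ and contradicts the standing assumption. With $l_1(M(u))=y$ settled, removing the single endpoint $x=M(u)$ from the higher endpoints of $B(u)$ leaves precisely those of $B(v)$, each a proper descendant of $M(u)$, so $\tilde{M}(u) = M(v)$. For case (iii), a back-edge from $T(c')$ to a strict ancestor of $v$ would put a higher endpoint of $B(v)$ outside $T(c)$, contradicting $M(v) \in T(c)$; hence $\mathit{low1}(c') \geq v > \mathit{low1}(c)$, and an analogous argument shows $\mathit{low1}(c'') \geq u$ for every other child $c''$ of $M(u)$, so $\{c,c'\}=\{c_1(M(u)), c_2(M(u))\}$ with $c=c_1$ and $c'=c_2$. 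This yields $M_{low1}(u) = M(v)$ and $M_{low2}(u) = x$, and then an argument completely analogous to the one used in case (ii) (any smaller back-edge $(x,y'')$ would either join $B(v)$, placing $x$ in $T(c_1)$ against $x \in T(c_2)$, or coincide with $e$) gives $l_1(M_{low2}(u))=y$.

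The main obstacle, in my view, is case (i): unlike the corresponding case of Lemma~\ref{lemma:identify_e_1}, $3$-edge-connectivity must be used in an essential way here to rule out the scenario $B(p(u))=B(u)$, and it is precisely this ingredient that justifies introducing $\mathit{low_M}$ as the parameter replacing $\mathit{high}$ and $\mathit{highD}$ in the algorithm of Section~\ref{sec:dfs}. Cases (ii) and (iii) are then essentially mechanical adaptations of the arguments already worked out in Lemma~\ref{lemma:identify_e_1}.
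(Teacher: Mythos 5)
Your trichotomy via $M(u)=\mathit{nca}(M(v),x)$ is sound, and your treatments of cases (ii) and (iii) are correct -- if anything, cleaner than the paper's elimination of the child indices $k>2$ and $k=1$. The genuine gap is in case (i). The sandwich $B(v)\subseteq B(w)\subseteq B(u)$ simply fails for intermediate vertices $w\in T[u,v]$: a back-edge from $T(w)\setminus T(u)$ to a vertex of $T(w,v]$ lies in $B(w)$ but in neither $B(v)$ nor $B(u)$, and nothing in the hypothesis $B(u)=B(v)\sqcup\{e\}$ forbids such edges. Consequently the conclusions you draw from it -- $M(p(u))=M(u)$, $\mathit{nextM}(u)=p(u)$, and $y=p(u)$ -- are all false in general. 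Concretely, take vertices $1,\dots,7$ in DFS order, tree edges $(2,1),(3,2),(4,3),(5,4),(6,5),(7,4)$, and back-edges $(6,1)$ (doubled), $(6,3)$, $(6,5)$, $(7,2)$, $(7,3)$ (doubled). This multigraph is $3$-edge-connected, and for $u=5$, $v=2$ we have $B(5)=B(2)\sqcup\{(6,3)\}$ with $M(5)=M(2)=6$; yet $M(3)=M(4)=4$ because of the back-edges out of the side subtree $\{7\}$, so $B(3)\not\subseteq B(5)$ and $B(4)\not\subseteq B(5)$, $\mathit{nextM}(5)=2=v\neq p(5)=4$, and $y=3$ is a proper ancestor of $p(u)=4$. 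In particular $e\in B(p(u))$ while $B(p(u))\neq B(u)$, so the $2$-cut contradiction you invoke never materializes; your case (i) does not go through, even though the lemma's conclusion still holds there (indeed $B(5)\setminus B(\mathit{nextM}(5))=\{(6,3)\}$).

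What case (i) actually needs is $v=\mathit{nextM}(u)$, and the inclusion chain must be restricted to vertices with the same $M$-value rather than to all vertices of the tree path $T[u,v]$. If $v'=\mathit{nextM}(u)$ were distinct from $v$, then $v<v'<u$ and $M(v')=M(u)=M(v)$; for such a $v'$ the inclusions $B(v)\subseteq B(v')\subseteq B(u)$ do hold, because all higher ends in question are descendants of the common vertex $M(u)$, while $v,v',u$ are ancestors of $M(u)$ and hence lie on one tree path. Since $|B(u)\setminus B(v)|=1$, this forces $B(v')=B(v)$ or $B(v')=B(u)$, and either equality produces a $2$-cut ($\{(v,p(v)),(v',p(v'))\}$ or $\{(v',p(v')),(u,p(u))\}$), contradicting $3$-edge-connectivity -- so the connectivity assumption does enter essentially, in line with your intuition, but it must be applied to $v'\in M^{-1}(M(u))$, not to $p(u)$. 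With $v=\mathit{nextM}(u)$ established, $B(u)\setminus B(\mathit{nextM}(u))=\{e\}$ and $e=(\mathit{low_{M}D}(u),\mathit{low_M}(u))$ follows directly from the definition of $\mathit{low_M}$; there is no need to locate $y$, which, as the example shows, can be any vertex of $T[p(u),v]$.
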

\begin{proof}
$B(u)=B(v)\sqcup\{e\}$ implies that $M(u)$ is an ancestor of $M(v)$. If $M(u)=M(v)$, then $v=\mathit{nextM}(u)$. (For otherwise, there exists a vertex $v'$ with $M(v')=M(u)$ and $v<v'<u$, and so we have $B(v)\subset B(v')\subset B(u)$ - which is impossible, since $B(u)=B(v)\sqcup\{e\}$.) Since $e\in B(u)\setminus B(v)$ and $|B(u)\setminus B(v)|=1$ and $(\mathit{low_M{D}}(u),\mathit{low_M}(u))$ is a back-edge in $B(u)\setminus B(\mathit{nextM}(u))$, we conclude that $e=(\mathit{low_M{D}}(u),\mathit{low_M}(u))$.

Now let's assume that $M(u)$ is a proper ancestor of $M(v)$. There are two cases to consider: either $x=M(u)$, or $x$ is a descendant of a child of $M(u)$. If $x=M(u)$, then $(M(u),l_1(M(u)))$ is a back-edge in $B(u)$. Furthermore, $(M(u),l_1(M(u)))\notin B(v)$ (for otherwise we would have that $M(v)$ is an ancestor of $M(u)$). This shows that $e=(M(u),l_1(M(u)))$. We also see that $(M(u),y')\notin B(v)$, for any back-edge $(M(u),y')$ with $y'<u$. Thus, since $B(v)=B(u)\setminus\{e\}$, we have $M(v)=\mathit{nca}(\{x'\mid \exists (x',y')\in B(u)\}\setminus\{M(u)\})=\tilde{M}(u)$. Finally, let's assume that $x$ is a descendant of a child of $M(u)$, i.e. $x$ is a descendant of $c_k(M(u))$, for some $k$. We will show that $\mathit{low1}(c_k'(M(u)))<v$, for every $k'<k$. So suppose for the sake of contradiction, that $\mathit{low1}(c_k'(M(u)))\geq v$, for some $k'<k$. Since $e\in B(c_k(M(u)))$, we have $\mathit{low1}(c_k(M(u)))<u$; therefore, since $\mathit{low1}(c_k'(M(u)))\leq\mathit{low1}(c_k(M(u)))$, we have $\mathit{low1}(c_k'(M(u)))<u$. This shows that there exists a back-edge $e'\in B(u)$ which is also in $B(c_k'(M(u)))$. But since, $\mathit{low1}(c_k'(M(u)))\geq v$, it cannot be the case that $e'\in B(v)$. Thus we have provided two distinct back-edges $e,e'\in B(u)\setminus B(v)$, which is absurd. This shows that $\mathit{low1}(c_k'(M(u)))<v$, for every $k'<k$. If $k>2$, this implies that $M(v)$ is a common ancestor of at least two children of $M(u)$, which is absurd. Thus we have that $k\leq 2$. Now suppose for the sake of contradiction, that $k=1$. It cannot be the case that $\mathit{low1}(c_2(M(u)))<v$, for otherwise $\mathit{low1}(c_1(M(u)))<v$ also, which would imply that $M(v)$ is an ancestor of $M(u)$, which is absurd. Now, it cannot be the case that $\mathit{low1}(c_2(M(u)))<u$, for otherwise there would exist two distinct back-edges $e',e\in B(u)\setminus B(v)$, which is also absurd. Thus, $c_1(M(u))$ is the only child of $M(u)$ with $\mathit{low1}(c_1(M(u)))<u$, which means that there must exist a back-edge $(M(u),y')$ with $y'<u$. Now if $y'<v$, $M(v)$ is an ancestor of $M(u)$, which is absurd. And if $y\geq v$, then we have two distinct back-edges $e,e'\in B(u)\setminus B(v)$, which is also absurd. Thus the assumption $k=1$ cannot hold, and we conclude that $k=2$, i.e. $x$ is a descendant of $c_2(M(u))$. Now it cannot be the case that $\mathit{low1}(c_2(M(u)))<v$, for this would imply $\mathit{low1}(c_1(M(u)))<v$, and so we would have that $M(v)$ is an ancestor of $M(u)$. Thus $v<\mathit{low1}(c_2(M(u)))<u$, and so $B(u)\cap B(c_2(M(u)))=\{e\}$ (since $|B(u)\setminus B(v)|=1$). This shows that $e=(M_{low2}(u),l_1(M_{low2}(u)))$. Then, since $B(v)=B(u)\setminus\{e\}$, we have that $B(v)=(B(u)\cap B(c_1(M(u))))\sqcup(B(u)\cap\{(M(u),y')\in B(u)\})$. But since $M(u)$ is a proper ancestor of $M(v)$, we conclude that $B(v)=B(u)\cap B(c_1(M(u)))$, and therefore $M(v)=M_{low1}(u)$.
\end{proof}

The next two lemmas are the basis for a linear-time algorithm to compute all $3$-cuts of type-$2$.

\begin{lemma}(\cite{Linear4ECC:ESA21})
\label{lemma:find_u_from_v}
Let $v$ be an ancestor of $u$ such that $m=M(u) \in \{\tilde{M}(v), M_{low1}(v), M_{low2}(v)\}$.
Then, $v\in V(u)$ if and only if $u$ is the minimum vertex in $M^{-1}(m)$ greater than $v$ and such that $\mathit{high}(u)<v$ and $\mathit{b\_count}(v)=\mathit{b\_count}(u)+1$.
\end{lemma}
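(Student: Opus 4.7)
The plan is to exploit a simple preliminary observation: whenever $u$ is a descendant of $v$ and $\mathit{high}(u)<v$, we have $B(u)\subseteq B(v)$. Indeed for every $(x,y)\in B(u)$ we have $x\in T(u)\subseteq T(v)$, and $y<v$ together with the fact that $y$ and $v$ are both ancestors of $u$ forces $y$ to be a proper ancestor of $v$, so $(x,y)\in B(v)$. Granting this, the ``if'' direction is immediate: the inclusion combined with $\mathit{b\_count}(v)=\mathit{b\_count}(u)+1$ extracts a unique $e\in B(v)\setminus B(u)$, whence $B(v)=B(u)\sqcup\{e\}$ and $v\in V(u)$. Here the minimality of $u$ and the requirement $m\in\{\tilde M(v),M_{low1}(v),M_{low2}(v)\}$ play no role.

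For the ``only if'' direction, from $B(v)=B(u)\sqcup\{e\}$ one reads off $\mathit{b\_count}(v)=\mathit{b\_count}(u)+1$; the inclusion $B(u)\subseteq B(v)$ makes every ancestor endpoint $y$ of a back-edge in $B(u)$ a proper ancestor of $v$, giving $\mathit{high}(u)<v$; and $u$ is a proper descendant of $v$ with $M(u)=m$, so $u\in M^{-1}(m)$ and $u>v$. The substantive content is minimality. Assume for contradiction a competitor $u'\in M^{-1}(m)$ exists with $v<u'<u$ satisfying the same two conditions. Since $u$ and $u'$ share the common descendant $m$ and $u'<u$, $u'$ is an ancestor of $u$, so $v,u',u$ lie in this order on the tree. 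Applying the preliminary observation twice yields $B(u)\subseteq B(u')\subseteq B(v)$, and the equalities $|B(v)|-|B(u)|=|B(v)|-|B(u')|=1$ then force $B(u)=B(u')$.

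The main obstacle is turning $B(u)=B(u')$ into a contradiction, which is where $3$-edge-connectivity enters. I would take $S=T(u')\setminus T(u)$ and verify that the only edges crossing the cut $(S,V\setminus S)$ are the tree-edges $(u',p(u'))$ and $(u,p(u))$: a crossing back-edge with descendant endpoint in $S$ would lie in $B(u')\setminus B(u)$, and one with ancestor endpoint in $S$ would lie in $B(u)\setminus B(u')$, both empty by $B(u)=B(u')$. Since $v$ is a proper ancestor of $u'$ we have $u'\neq r$, so both $S\ni u'$ and $V\setminus S\supseteq T(u)$ are non-empty and we obtain a genuine $2$-edge cut, contradicting $3$-edge-connectivity. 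Hence $u'=u$, establishing the minimality.
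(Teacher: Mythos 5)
This lemma is imported from \cite{Linear4ECC:ESA21} and stated here without proof, so there is no in-paper argument to compare against; judged on its own terms, your proof is correct. The preliminary observation (that $\mathit{high}(u)<v$ forces $B(u)\subseteq B(v)$), the counting step in the ``if'' direction, the extraction of $\mathit{high}(u)<v$ and $\mathit{b\_count}(v)=\mathit{b\_count}(u)+1$ from $B(v)=B(u)\sqcup\{e\}$, and the handling of minimality by forcing $B(u)=B(u')$ and exhibiting the $2$-edge cut $\{(u',p(u')),(u,p(u))\}$ around $S=T(u')\setminus T(u)$ are all sound; the only elided step is the routine check that no tree edge other than those two crosses $S$, which is immediate. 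One caveat is worth recording: you prove that $u$ is minimal among the vertices of $M^{-1}(m)$ exceeding $v$ that \emph{also} satisfy $\mathit{high}(\cdot)<v$ and the $\mathit{b\_count}$ equation, which is the literal reading of the statement; however, where the paper invokes the lemma (just before Algorithm~\ref{algorithm:B(v)=B(u)e_nohigh}) it uses the stronger conclusion that $u$ is the minimum element of $M^{-1}(m)$ greater than $v$ outright, and the completeness of the search in that algorithm rests on this stronger form. Your technique delivers it with one substitution: for a competitor $u'\in M^{-1}(m)$ with $v<u'<u$ you may no longer assume $\mathit{high}(u')<v$, but $M(u')=m$ already places the higher end of every back-edge of $B(u')$ inside $T(m)\subseteq T(u)$, while its lower end, being a proper ancestor of $u'$, is a proper ancestor of $u$; hence $B(u')\subseteq B(u)$, and combined with $B(u)\subseteq B(u')$ (which follows from $B(u)\subseteq B(v)$ and $v<u'$) this yields $B(u)=B(u')$ and the same $2$-cut contradiction, so the competitor cannot exist at all.
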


\begin{lemma}(\cite{Linear4ECC:ESA21})
\label{lemma:find_v_from_u}
Let $u$ be a descendant of $v$ such that $m=M(v) \in \{M(u), \tilde{M}(u), M_{low1}(u)\}$.
Then $u\in U(v)$ if and only if $v$ is the maximum vertex in $M^{-1}(m)$ less than $u$ and such that $\mathit{b\_count}(u)=\mathit{b\_count}(v)+1$.
\end{lemma}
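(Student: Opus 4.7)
The plan is to prove the two directions of the biconditional separately, leaning on Lemma \ref{lemma:identify_e_2} and on the already-established disjointness $U(v)\cap U(v')=\emptyset$ for $v\neq v'$. Both directions rest on a single workhorse observation: \emph{if $w$ is an ancestor of $u$ with $M(w)=m$ and $m\in T(u)$, then $B(w)\subseteq B(u)$.} This is immediate — any $(x,y)\in B(w)$ has $x\in T(M(w))=T(m)\subseteq T(u)$ and $y$ a proper ancestor of $w$, hence of $u$. A preliminary check I would make once and for all is that in each of the three allowed cases $m\in\{M(u),\tilde{M}(u),M_{low1}(u)\}$, the vertex $m$ is a descendant of $u$; this is clear, since $\tilde M(u)$ and $M_{low1}(u)$ are NCAs of subsets of upper ends of back-edges in $B(u)$, so they are descendants of $M(u)$, which itself lies in $T(u)$.

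For the forward direction, assume $u\in U(v)$, so $B(u)=B(v)\sqcup\{e\}$. Then $\mathit{b\_count}(u)=\mathit{b\_count}(v)+1$ is immediate, and $v\in M^{-1}(m)$ holds by the very definition $m=M(v)$. The only non-trivial task is showing $v$ is the \emph{maximum} vertex in $M^{-1}(m)$ less than $u$ with $\mathit{b\_count}(v)=\mathit{b\_count}(u)-1$. Suppose for contradiction such a $v'$ exists with $v<v'<u$. Since $m$ is a descendant of both $v'$ and $u$, these two vertices are related in the tree, forcing $v'$ to be an ancestor of $u$. The workhorse observation then gives $B(v')\subseteq B(u)$, and combined with $|B(v')|=|B(u)|-1$ this upgrades to $B(u)=B(v')\sqcup\{e'\}$, i.e., $u\in U(v')$, contradicting $U(v)\cap U(v')=\emptyset$.

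For the backward direction, from $M(v)=m$ and $m\in T(u)$ one concludes that $v$ and $u$ share $m$ as a common descendant, so they are ancestor-related; the assumption $v<u$ then forces $v$ to be an ancestor of $u$. Applying the workhorse observation with $w=v$ gives $B(v)\subseteq B(u)$, and the hypothesis $|B(u)|=|B(v)|+1$ promotes this to $B(u)=B(v)\sqcup\{e\}$ for a unique back-edge $e$, proving $u\in U(v)$. Notice that the maximality of $v$ is not invoked here — only its existence.

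The step I expect to require the most care is the ancestor/descendant bookkeeping in the forward direction: ensuring that the three vertices $v<v'<u$ are simultaneously pinned into a single ancestor chain so that the containment $B(v')\subseteq B(u)$ is legitimate. The hinge is the preliminary observation that $m\in T(u)$ in all three cases of the hypothesis; without it, $v'$ need not be an ancestor of $u$, and the whole contradiction collapses. Once this is established, the rest of the argument is purely a comparison of sizes of back-edge sets.
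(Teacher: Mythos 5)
Your backward direction is correct: since $m=M(v)$ is a descendant of $u$ in all three allowed cases, $v$ and $u$ have the common descendant $m$, so $v$ is an ancestor of $u$; every back-edge in $B(v)$ has its upper end in $T(m)\subseteq T(u)$ and its lower end a proper ancestor of $u$, hence $B(v)\subseteq B(u)$, and $\mathit{b\_count}(u)=\mathit{b\_count}(v)+1$ upgrades this to $B(u)=B(v)\sqcup\{e\}$. The problem is in the forward direction, where you prove only that $v$ is maximal among the vertices of $M^{-1}(m)$ below $u$ \emph{that also satisfy the $\mathit{b\_count}$ equality}: your contradiction needs $\mathit{b\_count}(v')=\mathit{b\_count}(u)-1$ in order to conclude $u\in U(v')$ and invoke $U(v)\cap U(v')=\emptyset$. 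But the lemma, in the form it is used right after its statement (``by Lemma~\ref{lemma:find_v_from_u} we have that $v$ is the maximum vertex in $M^{-1}(m)$ less than $u$'') and in the form Algorithm~\ref{algorithm:B(u)=B(v)e_nohigh} relies on, asserts that $v$ is the maximum vertex of $M^{-1}(m)$ below $u$ outright: the algorithm first locates that maximum and only then tests the $\mathit{b\_count}$ condition, so an intermediate $v'\in M^{-1}(m)$ with $v<v'<u$ that fails the test would make the algorithm miss the cut. Your argument does not exclude such a $v'$. (The paper itself does not reprove this lemma, citing~\cite{Linear4ECC:ESA21}, so the comparison here is against the statement as it is actually applied.)

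The gap is fixable with your own workhorse observation plus $3$-edge-connectivity, which you never invoke and which is genuinely needed for the stronger claim. Take any $v'\in M^{-1}(m)$ with $v<v'<u$. All of $v,v',u$ are ancestors of $m$, so they lie on one tree path with $v$ a proper ancestor of $v'$ and $v'$ a proper ancestor of $u$; since $M(v)=M(v')=m$, the same reasoning as in your workhorse gives $B(v)\subseteq B(v')\subseteq B(u)$. Because $|B(u)\setminus B(v)|=1$, either $B(v')=B(v)$ or $B(v')=B(u)$. In the first case $B(u)=B(v')\sqcup\{e\}$, so $u\in U(v')$ and your disjointness contradiction applies (equivalently, $\{(v,p(v)),(v',p(v'))\}$ would be a $2$-cut); in the second case $\{(v',p(v')),(u,p(u))\}$ would be a $2$-cut. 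Both are impossible in a $3$-edge-connected graph, so no such $v'$ exists and $v$ is indeed the maximum vertex of $M^{-1}(m)$ smaller than $u$. With this paragraph added, your proof is complete and in the spirit of the intended argument.
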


Now let $\{(u,p(u)),(v,p(v)),e\}$ be a $3$-cut (of type-$2$), where $e$ is a back-edge and $v<u$.
By Lemma~\ref{lemma:type-2-cut},
we have that $u$ is a descendant of $v$ and that either $B(v)=B(u)\sqcup\{e\}$ or $B(u)=B(v)\sqcup\{e\}$. We will handle these cases in turn.
In the following, we let $e=(x,y)$ be the back-edge of the $3$-cut of type-$2$.

\paragraph*{\textcolor{lgray}{$\blacktriangleright$} Case $B(v)=B(u)\sqcup\{e\}$}
By Lemma \ref{lemma:identify_e_1}, one of the following cases holds (see Figure~\ref{figure:type2-3cuts-example}):
either $(1)$ $x=M(v)$, $y=l_1(M(v))$ and $M(u)=\tilde{M}(v)$, or $(2)$ $x=M_{low2}(v)$, $y=l_1(M_{low2}(v))$ and $M(u)=M_{low1}(v)$, or $(3)$ $x=M_{low1}(v)$, $y=l_1(M_{low1}(v))$ and $M(u)=M_{low2}(v)$. In any case, by Lemma~\ref{lemma:find_u_from_v} we have that $u$ is the minimum vertex in $M^{-1}(m)$ greater than $v$, where $m=\tilde{M}(v)$, or $m=M_{low1}(v)$, or $m=M_{low2}(v)$, depending on whether $(1)$, or $(2)$, or $(3)$ is true, respectively. Thus, we can compute all those $3$-cuts by finding, for every vertex $v$, and every $m\in\{\tilde{M}(v),M_{low1}(v),M_{low2}(v)\}$, the minimum vertex $u$ in $M^{-1}(m)$ greater than $v$, and then check whether $B(v)=B(u)\sqcup\{e\}$, for a back-edge $e$. This last condition is equivalent to $\mathit{b\_count}(v)=\mathit{b\_count}(u)+1$ and $\mathit{high}(u)<\mathit{high}(v)$. Note that $\mathit{high}(u)<\mathit{high}(v)$ is necessary to ensure $B(u)\subseteq B(v)$; then, $\mathit{b\_count}(v)=\mathit{b\_count}(u)+1$ implies the existence of a back-edge $e$ such that $B(v)=B(u)\sqcup\{e\}$. We will now show how to check this condition without using the $\mathit{high}$ points.
To that end, we provide the following characterizations.

\begin{figure*}[t!]
\begin{center}
\centerline{\includegraphics[trim={0 0 0 4.7cm}, clip=true, width=\textwidth]{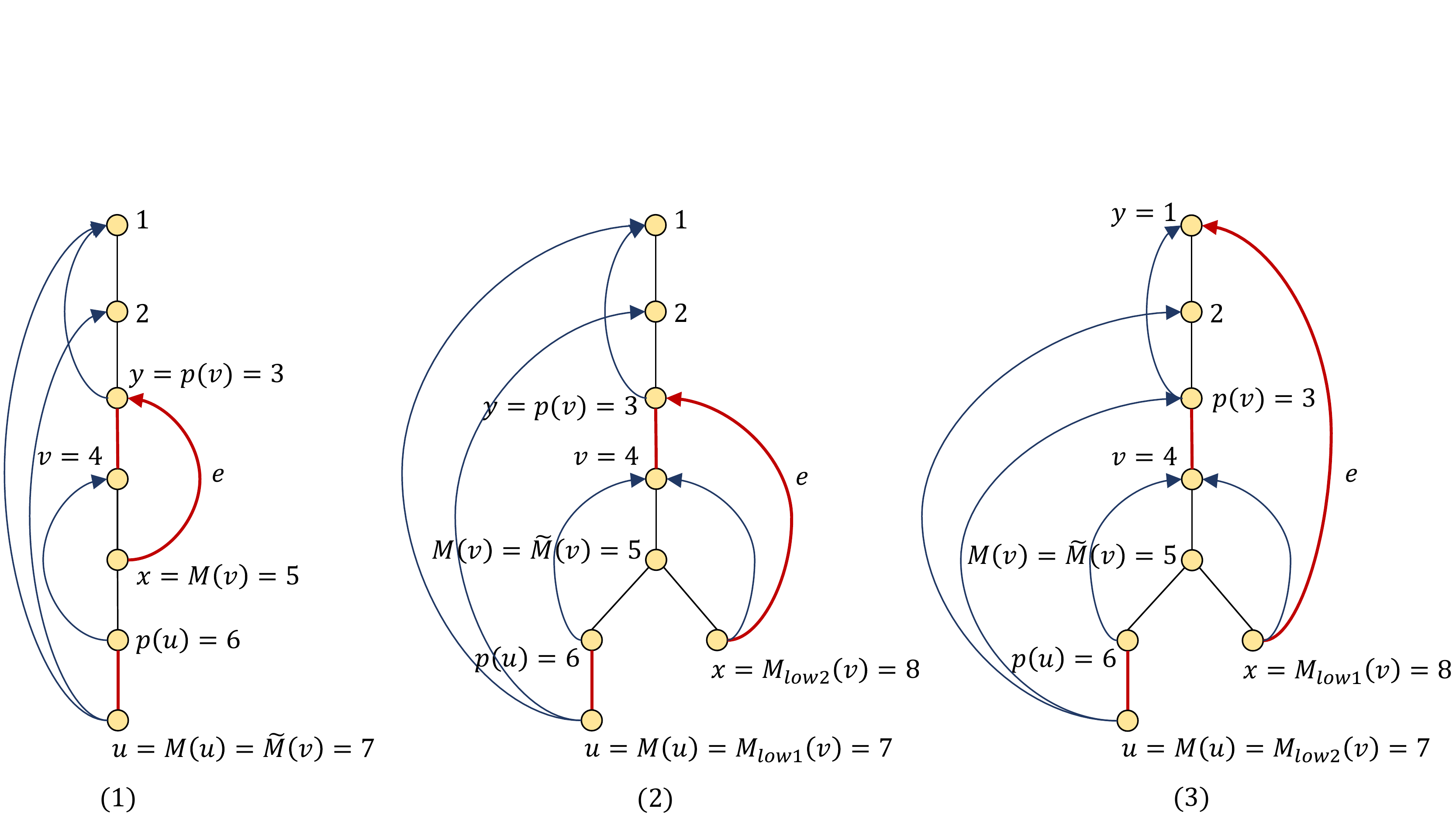}}
\caption{An illustration of the three cases for type-$2$ cuts $\{(u,p(u)),(v,p(v)),e\}$ (shown with red edges) where $v$ is an ancestor of $u$, $e$ is a back-edge, and $B(v)=B(u)\sqcup\{e\}$.
Vertices are numbered in DFS order and back-edges are shown directed from descendant to ancestor in the DFS tree.
\label{figure:type2-3cuts-example}}
\end{center}
\vspace{-1cm}
\end{figure*}

\begin{lemma} \textcolor{cgray}{(Case (1))}
\label{lemma:find_u_from_v_Mtilde}
Let $m=\tilde{M}(v)$ and $u$ be the minimum vertex in $M^{-1}(m)$ strictly greater than $v$. Then there exists a back-edge $e$ such that $B(v)=B(u)\sqcup\{e\}$ if and only if: $\mathit{b\_count}(v)=\mathit{b\_count}(u)+1$, $l_2(M(v))\geq v$, and either $M(v)$ has no $\mathit{low2}$ child, or $\mathit{low1}(c_2(M(v)))\geq v$.
\end{lemma}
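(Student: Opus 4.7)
\emph{Plan.} I would handle the two directions separately, using Lemma~\ref{lemma:identify_e_1} to drive the forward direction and a direct structural analysis of $B(v)$ for the converse.

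For the forward direction, assume $B(v)=B(u)\sqcup\{e\}$. Since $M(u)=m=\tilde{M}(v)$, we are in case~(1) of Lemma~\ref{lemma:identify_e_1}, so $e=(M(v),l_1(M(v)))$; the count equality $\mathit{b\_count}(v)=\mathit{b\_count}(u)+1$ is then immediate. Recall from the proof of that lemma that $M(v)$ is a proper ancestor of $M(u)$. If we had $l_2(M(v))<v$, then the distinct back-edge $(M(v),l_2(M(v)))$ would lie in $B(v)\setminus\{e\}=B(u)$, forcing $M(v)$ to be a descendant of $M(u)$, contradicting the proper-ancestor relation. Likewise, if $c_2(M(v))$ existed with $\mathit{low1}(c_2(M(v)))<v$, then $B(v)$ would contain back-edges with higher ends in both $T(c_1(M(v)))$ and $T(c_2(M(v)))$; these all survive in $B(u)=B(v)\setminus\{e\}$ because the higher end of $e$ is $M(v)$ itself, so $M(u)$ would be a common ancestor of $c_1(M(v))$ and $c_2(M(v))$ and hence an ancestor of $M(v)$, again contradicting the proper-ancestor relation.

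For the converse, I first pin down the structure of $B(v)$. Since the $\mathit{low1}$-values of the children of $M(v)$ are non-decreasing in the $c_k$-order, the hypothesis gives $\mathit{low1}(c_k(M(v)))\geq v$ for every $k\geq 2$, so no back-edge of $B(v)$ has its higher end in $T(c_k(M(v)))$ for $k\geq 2$. The assumption $l_2(M(v))\geq v$ implies that the only possible back-edge of $B(v)$ with higher end exactly $M(v)$ is $(M(v),l_1(M(v)))$. Hence every back-edge of $B(v)$ either equals $(M(v),l_1(M(v)))$ or has its higher end in $T(c_1(M(v)))$. Since $B(v)\neq\emptyset$ and $M(v)$ is the $\mathit{nca}$ of all higher ends of $B(v)$, the ``purely inside $T(c_1(M(v)))$'' scenario is impossible, so $l_1(M(v))<v$ and $e:=(M(v),l_1(M(v)))$ does lie in $B(v)$. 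Setting $B^*:=B(v)\cap B(c_1(M(v)))$, we obtain $B(v)=\{e\}\sqcup B^*$, and consequently $\tilde{M}(v)=M_{low1}(v)=m$.

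It remains to show $B(u)=B^*$. Since $u$ and $v$ are both ancestors of $M(u)=m$ and $u>v$, $v$ is an ancestor of $u$. For any $(x,y)\in B^*$ we have $y<v\leq u$, so $y$ is a proper ancestor of $u$; moreover $x$ is a descendant of $M_{low1}(v)=M(u)$ and hence of $u$, so $(x,y)\in B(u)$. This yields $B^*\subseteq B(u)$, and the count hypothesis then forces $B^*=B(u)$ by cardinality. Finally, $M(v)$ is not a descendant of $M_{low1}(v)$, so $e\notin B(u)$, giving $B(v)=B(u)\sqcup\{e\}$. The delicate point I expect to spend the most care on is the existence of $e$ in the converse, for which the $\mathit{nca}$ characterization of $M(v)$ is essential; once $e$ is in hand, the rest is careful bookkeeping with $B(v)$, $B^*$, and $B(u)$.
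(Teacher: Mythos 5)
Your proof is correct, and in substance it follows the paper's route: forward direction by showing $M(v)$ is a proper ancestor of $M(u)$ and then ruling out a second edge at $M(v)$ and a second contributing child, converse by decomposing $B(v)$ into $\{(M(v),l_1(M(v)))\}$ plus the back-edges with higher end under $c_1(M(v))$, showing the latter set is contained in $B(u)$, and finishing by the cardinality hypothesis. The two places where you deviate are minor and both work: (i) in the converse you establish the existence of the edge $e=(M(v),l_1(M(v)))$ from the $\mathit{nca}$ characterization of $M(v)$ (if all higher ends lay in $T(c_1(M(v)))$ then $M(v)$ would not be their $\mathit{nca}$), whereas the paper gets existence by contradiction with $\mathit{b\_count}(v)=\mathit{b\_count}(u)+1$; your variant is arguably more direct. (ii) In the forward direction you import $e=(M(v),l_1(M(v)))$ from ``case (1)'' of Lemma~\ref{lemma:identify_e_1}, whereas the paper re-derives it; here you should be a little more careful, since that lemma is stated as a disjunction whose cases are not declared mutually exclusive, so knowing $M(u)=\tilde{M}(v)$ does not, purely formally, select case (1). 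The fix is one line: in cases (2) and (3) both $c_1(M(v))$ and $c_2(M(v))$ contribute back-edges to $B(v)$ below $v$, which forces $\tilde{M}(v)=M(v)$, contradicting the already-established fact that $M(v)$ is a proper ancestor of $M(u)=\tilde{M}(v)$. (Relatedly, you cite the proper-ancestor fact ``from the proof of'' Lemma~\ref{lemma:identify_e_1}; it is cleaner to re-derive it in one sentence, as the paper does, since it is not part of that lemma's statement.) With these two small tightenings your argument is complete.
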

\begin{proof}
($\Rightarrow$) $\mathit{b\_count}(v)=\mathit{b\_count}(u)+1$ is an immediate consequence of $B(v)=B(u)\sqcup\{e\}$. Furthermore, $B(v)=B(u)\sqcup\{e\}$ implies that $M(v)$ is an ancestor of $M(u)$. But it cannot be the case that $M(v)=M(u)$ (for otherwise, $v$ being an ancestor of $u$ would imply that $B(v)$ is a subset of $B(u)$); thus, $M(v)$ is a proper ancestor of $M(u)$. Since $M(u)=\tilde{M}(v)$, this implies that there is no back-edge $(x,y)$ with $x$ a descendant of a child $c$ of $M(v)$, with $c\neq c_1(M(v))$, and $y$ a proper ancestor of $v$ (otherwise, we would have $\tilde{M}(v)=M(v)$). This means that $\mathit{low1}(c)\geq v$, for every child $c$ of $M(v)$ with $c\neq c_1(M(v))$. In other words, either $M(v)$ has no $\mathit{low2}$ child, or $\mathit{low1}(c_2(M(v)))\geq v$. Since $\tilde{M}(v)\neq\emptyset$, this also means that there exists a back-edge $\tilde{e}=(M(v),l_1(M(v)))$. Obviously, $\tilde{e}=e$, since $\tilde{e}\in B(v)\setminus{B(u)}$. Now, if we had $l_2(M(v))<v$, then there would exist a back-edge $e'=(M(v),l_2(M(v)))\neq e$. But then we would have $e'\in B(v)\setminus{B(u)}$, contradicting $B(v)=B(u)\sqcup\{e\}$.\\
($\Leftarrow$) Let $(x,y)\in B(v)$. Since $M(v)$ either has no $\mathit{low2}$ child, or $\mathit{low1}(c_2(M(v)))\geq v$, we have that $x$ is either $M(v)$ or a descendant of $\tilde{M}(v)$.
Let $\tilde{B}(v)=\{(x,y)|x\in T(\tilde{M}(v)) \mbox{ and } y<v\}$. Then we have $B(v)=\tilde{B}(v)\sqcup\{(M(v),z)|z<v\}$. Now, if $(x,y)\in\tilde{B}(v)$, then $x$ is a descendant of $\tilde{M}(v)$, and therefore a descendant of $M(u)$. Furthermore, $y$ is an ancestor of $v$, and therefore an ancestor of $u$. This means that $\tilde{B}(v)\subseteq B(u)$. Now, since $l_2(M(v))\geq v$, there is at most one back-edge $e=(M(v),z)$ with $z<v$ (which thus satisfies $z=l_1(M(v))$). But such a back-edge must necessarily exist, for otherwise we would have $B(v)=\tilde{B}(v)\subseteq B(u)$, contradicting $\mathit{b\_count}(v)=\mathit{b\_count}(u)+1$. Now, since $B(v)=\tilde{B}(v)\sqcup\{(M(v),z)|z<v\}$ and $|\{(M(v),z)|z<v\}|=1$ and $\tilde{B}(v)\subseteq B(u)$ and $\mathit{b\_count}(v)=\mathit{b\_count}(u)+1$, we conclude that $\tilde{B}(v)=B(u)$, and therefore $B(v)=B(u)\sqcup\{(M(v),l_1(M(v))\}$.
\end{proof}

\begin{lemma} \textcolor{cgray}{(Case (2))}
\label{lemma:find_u_from_v_Mlow1}
Let $l_1(M(v))\geq v$, $m=M_{low1}(v)$, and $u$ be the minimum vertex in $M^{-1}(m)$ strictly greater than $v$. Then there exists a back-edge $e$ such that $B(v)=B(u)\sqcup\{e\}$ if and only if: $\mathit{b\_count}(v)=\mathit{b\_count}(u)+1$, $\mathit{low2}(M_{low2}(v))\geq v$, and either $M(v)$ has no $\mathit{low3}$ child, or $\mathit{low1}(c_3(M(v)))\geq v$.
\end{lemma}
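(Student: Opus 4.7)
The plan is to mirror the structure of the proof of Lemma~\ref{lemma:find_u_from_v_Mtilde}, adapting it to the situation where the distinguishing back-edge $e$ lies in the subtree $T(c_2(M(v)))$ rather than being incident to $M(v)$ itself. The main obstacle is the careful subtree bookkeeping across the children of $M(v)$: in particular, confirming that the vertex $u$ lies entirely inside $T(c_1(M(v)))$, so that the back-edge in $B(v)\setminus B(u)$ sitting in $T(c_2(M(v)))$ cannot be accidentally absorbed into $B(u)$ and double-counted.

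For the forward direction ($\Rightarrow$) I would first apply Lemma~\ref{lemma:identify_e_1} to pin down $e$. Case~(3) of that lemma is immediately excluded because $M_{low1}(v)$ and $M_{low2}(v)$ lie in the disjoint subtrees $T(c_1(M(v)))$ and $T(c_2(M(v)))$. Case~(1) is excluded because, under the standing hypothesis $l_1(M(v))\geq v$, the back-edge $(M(v),l_1(M(v)))$ does not belong to $B(v)$, whereas Case~(1) would have $e$ of exactly this form. Hence Case~(2) applies, yielding $e=(M_{low2}(v),l_1(M_{low2}(v)))$ and in particular forcing $M_{low2}(v)$ to be defined. The equality $\mathit{b\_count}(v)=\mathit{b\_count}(u)+1$ then follows from $|B(v)\setminus B(u)|=1$. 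Since every back-edge in $B(u)$ has its higher endpoint in $T(M(u))=T(M_{low1}(v))\subseteq T(c_1(M(v)))$, any additional back-edge of $B(v)$ sitting either in $T(M_{low2}(v))$ (which would be witnessed by $\mathit{low2}(M_{low2}(v))<v$) or in $T(c_k(M(v)))$ for some $k\geq 3$ (which would be witnessed by $\mathit{low1}(c_3(M(v)))<v$) would yield a second element of $B(v)\setminus B(u)$, a contradiction; this gives the two remaining conditions.

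For the backward direction ($\Leftarrow$) I would partition $B(v)$ according to which child subtree of $M(v)$ contains the higher endpoint of each back-edge. The hypothesis $l_1(M(v))\geq v$ eliminates back-edges with higher endpoint $M(v)$; the $c_3$ hypothesis eliminates back-edges with higher endpoint in $T(c_k(M(v)))$ for $k\geq 3$; and $\mathit{low2}(M_{low2}(v))\geq v$ ensures that $T(c_2(M(v)))$ contributes exactly the single back-edge $e^{*}:=(\mathit{low1D}(M_{low2}(v)),\mathit{low1}(M_{low2}(v)))$. Therefore $B(v)=B_1\sqcup\{e^{*}\}$, where $B_1$ collects the back-edges whose higher endpoint lies in $T(M_{low1}(v))$. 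A short tree argument shows that $u\in T(c_1(M(v)))$: it is a descendant of $v$ and an ancestor of $M_{low1}(v)$, and it cannot be an ancestor of $M(v)$, because that would force $M(u)=M(v)\neq M_{low1}(v)$. Consequently $T(M_{low1}(v))\subseteq T(u)\subseteq T(c_1(M(v)))$, so $B_1\subseteq B(u)$ and $e^{*}\notin B(u)$. The $\mathit{b\_count}$ equality then forces $|B(u)|=|B_1|$, hence $B_1=B(u)$, which yields $B(v)=B(u)\sqcup\{e^{*}\}$, as required.
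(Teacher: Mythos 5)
Your proposal is correct and essentially mirrors the paper's own proof: the ($\Leftarrow$) direction is the same partition of $B(v)$ by the child subtrees of $M(v)$ into $B_1\sqcup\{e^{*}\}$ with $B_1\subseteq B(u)$ followed by the cardinality argument, and the ($\Rightarrow$) direction's observation that any back-edge of $B(v)$ with higher end outside $T(c_1(M(v)))$ must be the unique edge $e$ is the paper's argument, merely routed through Lemma~\ref{lemma:identify_e_1}. One minor imprecision: if $u$ were an ancestor of $M(v)$ this would only force $M(u)$ to be an ancestor of $M(v)$ (not necessarily $M(u)=M(v)$), but that already contradicts $M(u)=M_{low1}(v)$ being a proper descendant of $M(v)$, so your claim $u\in T(c_1(M(v)))$ stands --- and in fact it is not even needed, since $B_1\subseteq B(u)$ follows directly from the higher ends of $B_1$ lying in $T(M_{low1}(v))=T(M(u))$ and $v$ being a proper ancestor of $u$.
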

\begin{proof}
($\Rightarrow$) $\mathit{b\_count}(v)=\mathit{b\_count}(u)+1$ is an immediate consequence of $B(v)=B(u)\sqcup\{e\}$. Now, $l_1(M(v))\geq v$ implies that every back-edge $(x,y)\in B(v)$ has $x\in T(c)$, where $c$ is a child of $M(v)$. Take a back-edge $(x,y)\in B(v)$, with $x\notin T(c_1(M(v)))$. Then $(x,y)\notin B(u)$, and therefore $B(v)=B(u)\sqcup\{e\}$ implies that $(x,y)=e$. This means that $x\in T(c_2(M(v))$, $\mathit{low2}(M_{low2}(v))\geq v$, and either $M(v)$ has no $\mathit{low3}$ child, or $\mathit{low1}(c_3(M(v)))\geq v$.\\
($\Leftarrow$) $l_1(M(v))\geq v$ implies that every back-edge $(x,y)\in B(v)$ has $x\in T(c)$, where $c$ is a child of $M(v)$. Furthermore, it implies that there exists at least one back-edge $(x,y)\in B(v)$ with $x\in T(c_2(M(v)))$. Now let $B_{low1}(v)=\{(x,y)|x\in T(c_1(M(v)))\}$. Then, since $\mathit{low2}(M_{low2}(v))\geq v$, and either $M(v)$ has no $\mathit{low3}$ child, or $\mathit{low1}(c_3(M(v)))\geq v$, we have that $B(v)=B_{low1}(v)\sqcup\{(M_{low2}(v),l_1(M_{low2}(v))\}$. Since $M_{low1}(v)=M(u)$ and $v$ is an ancestor of $u$, we have that $B_{low1}(v)\subseteq B(u)$. Now, from $B(v)=B_{low1}(v)\sqcup\{(M_{low2}(v),l_1(M_{low2}(v))\}$, $B_{low1}(v)\subseteq B(u)$, and $\mathit{b\_count}(v)=\mathit{b\_count}(u)+1$, we conclude that $B(v)=B(u)\sqcup\{(M_{low2}(v),l_1(M_{low2}(v))\}$.
\end{proof}

\begin{lemma} \textcolor{cgray}{(Case (3))}
\label{lemma:find_u_from_v_Mlow2}
Let $l_1(M(v))\geq v$, $m=M_{low2}(v)$, and $u$ be the mimimum vertex in $M^{-1}(m)$ strictly greater than $v$. Then there exists a back-edge $e$ such that $B(v)=B(u)\sqcup\{e\}$ if and only if: $\mathit{b\_count}(v)=\mathit{b\_count}(u)+1$, $\mathit{low2}(M_{low1}(v))\geq v$, and either $M(v)$ has no $\mathit{low3}$ child, or $\mathit{low1}(c_3)\geq v$, where $c_3$ is the $\mathit{low3}$ child of $M(v)$.
\end{lemma}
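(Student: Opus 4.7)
The plan is to prove this by mirroring the proof of Lemma~\ref{lemma:find_u_from_v_Mlow1} (Case (2)) with the roles of $M_{low1}(v)$ and $M_{low2}(v)$ exchanged. The hypothesis $l_1(M(v)) \geq v$ again guarantees that no back-edge of the form $(M(v), y) \in B(v)$ has $y < v$, so every back-edge in $B(v)$ has its higher endpoint strictly inside some subtree $T(c)$ rooted at a child $c$ of $M(v)$. This reduces the whole analysis to bookkeeping how $B(v)$ distributes among the subtrees $T(c_1(M(v))), T(c_2(M(v))), T(c_3(M(v))), \dots$

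For the forward direction, assume $B(v) = B(u) \sqcup \{e\}$. The count equality $\mathit{b\_count}(v) = \mathit{b\_count}(u)+1$ is immediate. Since $M(u) = M_{low2}(v)$, every back-edge in $B(u)$ has higher endpoint in $T(c_2(M(v)))$, so any back-edge of $B(v)$ whose higher endpoint lies outside $T(c_2(M(v)))$ must coincide with $e$. Uniqueness of $e$ therefore forces: (i) no back-edge of $B(v)$ has higher endpoint in $T(c_k(M(v)))$ for $k \geq 3$, i.e.\ either $M(v)$ has no $\mathit{low3}$ child or $\mathit{low1}(c_3) \geq v$; and (ii) at most one back-edge of $B(v)$ has higher endpoint in $T(c_1(M(v)))$, which — since $M_{low1}(v)$ exists and witnesses at least one such edge — is equivalent to $\mathit{low2}(M_{low1}(v)) \geq v$.

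For the reverse direction, the conditions force the clean decomposition
\[
B(v) = B_{low2}(v) \;\sqcup\; \{(M_{low1}(v), l_1(M_{low1}(v)))\},
\]
where $B_{low2}(v) := \{(x,y) \in B(v) : x \in T(c_2(M(v)))\}$: the $c_3$ condition eliminates contributions from children beyond the first two, and $\mathit{low2}(M_{low1}(v)) \geq v$ leaves exactly one back-edge with higher endpoint in $T(c_1(M(v)))$, which by the definition of $M_{low1}(v)$ and $l_1$ must be $(M_{low1}(v), l_1(M_{low1}(v)))$. Since $M(u) = M_{low2}(v)$ and $v$ is an ancestor of $u$, we have $B_{low2}(v) \subseteq B(u)$. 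Combining this containment with $\mathit{b\_count}(v) = \mathit{b\_count}(u) + 1$ and $|B(v) \setminus B_{low2}(v)| = 1$ forces $B_{low2}(v) = B(u)$, whence $B(v) = B(u) \sqcup \{(M_{low1}(v), l_1(M_{low1}(v)))\}$.

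The only nontrivial point is verifying that the extra back-edge is \emph{uniquely} identified as $(M_{low1}(v), l_1(M_{low1}(v)))$ rather than some other edge from $T(c_1(M(v)))$; this uses the hypothesis $l_1(M(v)) \geq v$ to rule out contributions anchored at $M(v)$ itself, together with the defining property of $M_{low1}(v)$ as the nca of the relevant higher endpoints. All other steps are formal duals of the Case (2) argument.
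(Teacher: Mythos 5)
Your proposal is correct and takes essentially the same route as the paper, which also proves Case (3) as the formal dual of Case (2): the forward direction by observing that any back-edge of $B(v)$ with higher end outside $T(c_2(M(v)))$ must coincide with $e$, and the reverse direction via the decomposition $B(v)=B_{low2}(v)\sqcup\{(M_{low1}(v),l_1(M_{low1}(v)))\}$ with $B_{low2}(v)\subseteq B(u)$ and the count equality forcing $B(u)=B_{low2}(v)$. Your closing remark on uniquely identifying the extra edge as $(M_{low1}(v),l_1(M_{low1}(v)))$ is a point the paper glosses over at the same level of detail, so nothing is missing.
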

\begin{proof}
($\Rightarrow$) $\mathit{b\_count}(v)=\mathit{b\_count}(u)+1$ is an immediate consequence of $B(v)=B(u)\sqcup\{e\}$. Now, $l_1(M(v))\geq v$ implies that every back-edge $(x,y)\in B(v)$ has $x\in T(c)$, where $c$ is a child of $M(v)$. Take a back-edge $(x,y)\in B(v)$, with $x\notin T(c_2(M(v)))$. Then $(x,y)\notin B(u)$, and therefore $B(v)=B(u)\sqcup\{e\}$ implies that $(x,y)=e$. This means that $x\in T(c_1(M(v))$, $\mathit{low2}(M_{low1}(v))\geq v$, and either $M(v)$ has no $\mathit{low3}$ child, or $\mathit{low1}(c_3(M(v)))\geq v$.\\
($\Leftarrow$) $l_1(M(v))\geq v$ implies that every back-edge $(x,y)\in B(v)$ has $x\in T(c)$, where $c$ is a child of $M(v)$. Furthermore, it implies that there exists at least one back-edge $(x,y)\in B(v)$ with $x\in T(c_1(M(v)))$. Now let $B_{low2}(v)=\{(x,y)|x\in T(c_2(M(v)))\}$. Then, since $\mathit{low2}(M_{low1}(v))\geq v$, and either $M(v)$ has no $\mathit{low3}$ child, or $\mathit{low1}(c_3(M(v)))\geq v$, we have that $B(v)=B_{low2}(v)\sqcup\{(M_{low1}(v),l_1(M_{low1}(v))\}$. Since $M_{low2}(v)=M(u)$ and $v$ is an ancestor of $u$, we have that $B_{low2}(v)\subseteq B(u)$. Now, from $B(v)=B_{low2}(v)\sqcup\{(M_{low1}(v),l_1(M_{low1}(v))\}$, $B_{low2}(v)\subseteq B(u)$, and $\mathit{b\_count}(v)=\mathit{b\_count}(u)+1$, we conclude that $B(v)=B(u)\sqcup\{(M_{low1}(v),l_1(M_{low1}(v))\}$.
\end{proof}

Algorithm~\ref{algorithm:B(v)=B(u)e_nohigh} shows how we can determine all $3$-cuts of this type.
The idea is to handle cases $(1)$, $(2)$ and $(3)$ separately, and our goal is to find, for every vertex $v$, the minimum vertex $u$ in $M^{-1}(m)$ (where $m\in\{\tilde{M}(v),M_{low1}(v),M_{low2}(v)\}$) which is strictly greater than $v$, and then check whether $B(v)=B(u)\sqcup\{e\}$, for a back-edge $e$. We can perform this search in linear time by processing the vertices in a bottom-up fashion, and keep in a variable $\mathit{currentVertex}[m]$ the lowest element of $M^{-1}(m)$ that we accessed so far.
Then we can easily check in constant time whether a pair of vertices $u,v$ such that $u$ is the minimum vertex in $M^{-1}(m)$ which is strictly greater than $v$ satisfies $B(v)=B(u)\sqcup\{e\}$, for a back-edge $e$, and also identify this back-edge.

\begin{algorithm}[h!]
\caption{\textsf{Find all $3$-cuts $\{(u,p(u)),(v,p(v)),e)\}$, where $u$ is a descendant of $v$ and $B(v)=B(u)\sqcup\{e\}$, for a back-edge $e$.}}
\label{algorithm:B(v)=B(u)e_nohigh}
\LinesNumbered
\DontPrintSemicolon
initialize an array $\mathit{currentVertex}$ with $n$ entries\;
\textcolor{cblue}{\tcp{\textit{Case (1): $m=\tilde{M}(v)$}}}
\lForEach{vertex $x$}{$\mathit{currentVertex}[x] \leftarrow x$}
\For{$v\leftarrow n$ to $v=1$}{
  $m \leftarrow \tilde{M}(v)$\;
  \lIf{$m=\emptyset$}{\textbf{continue}}
  \textcolor{cblue}{\tcp{\textit{find the minimum $u\in M^{-1}(m)$ that is greater than $v$}}}
  $u \leftarrow \mathit{currentVertex}[m]$\;
  \lWhile{$\mathit{nextM}(u)\neq\emptyset$ \textbf{and} $\mathit{nextM}(u)> v$}{$u \leftarrow \mathit{nextM}(u)$}
  $\mathit{currentVertex}[m] \leftarrow u$\;
  \textcolor{cblue}{\tcp{\textit{check the condition in Lemma~\ref{lemma:find_u_from_v_Mtilde}}}}
  \If{$\mathit{b\_count}(v)=\mathit{b\_count}(u)+1$ \textbf{and} $l_2(M(v))\geq v$ \textbf{and} ($c_2(M(v))=\emptyset$ \textbf{or} $\mathit{low1}(c_2(M(v)))\geq v$)}{
    mark the triplet $\{(u,p(u)),(v,p(v)),(M(v),l_1(M(v)))\}$
  }
}
\textcolor{cblue}{\tcp{\textit{Case (2): $m=M_{low1}(v)$}}}
\lForEach{vertex $x$}{$\mathit{currentVertex}[x] \leftarrow x$}
\For{$v\leftarrow n$ to $v=1$}{
  $m \leftarrow M_{low1}(v)$\;
  \lIf{$m=\emptyset$ \textbf{or} $l_1(M(v))<v$}{\textbf{continue}}
  \textcolor{cblue}{\tcp{\textit{find the minimum $u\in M^{-1}(m)$ that is greater than $v$}}}
  $u \leftarrow \mathit{currentVertex}[m]$\;
  \lWhile{$\mathit{nextM}(u)\neq\emptyset$ \textbf{and} $\mathit{nextM}(u)> v$}{$u \leftarrow \mathit{nextM}(u)$}
  $\mathit{currentVertex}[m] \leftarrow u$\;
  \textcolor{cblue}{\tcp{\textit{check the condition in Lemma~\ref{lemma:find_u_from_v_Mlow1}}}}
  \If{$\mathit{b\_count}(v)=\mathit{b\_count}(u)+1$ \textbf{and} $\mathit{low2}(M_{low2}(v))\geq v$ \textbf{and} ($c_3(M(v))=\emptyset$ \textbf{or} $\mathit{low1}(c_3(M(v)))\geq v$)}{
    mark the triplet $\{(u,p(u)),(v,p(v)),(M_{low2}(v),l_1(M_{low2}(v)))\}$
  }
}
\textcolor{cblue}{\tcp{\textit{Case (3): $m=M_{low2}(v)$}}}
\lForEach{vertex $x$}{$\mathit{currentVertex}[x] \leftarrow x$}
\For{$v\leftarrow n$ to $v=1$}{
  $m \leftarrow M_{low2}(v)$\;
  \lIf{$m=\emptyset$ \textbf{or} $l_1(M(v))<v$}{\textbf{continue}}
  \textcolor{cblue}{\tcp{\textit{find the minimum $u\in M^{-1}(m)$ that is greater than $v$}}}
  $u \leftarrow \mathit{currentVertex}[m]$\;
  \lWhile{$\mathit{nextM}(u)\neq\emptyset$ \textbf{and} $\mathit{nextM}(u)> v$}{$u \leftarrow \mathit{nextM}(u)$}
  $\mathit{currentVertex}[m] \leftarrow u$\;
  \textcolor{cblue}{\tcp{\textit{check the condition in Lemma~\ref{lemma:find_u_from_v_Mlow2}}}}
  \If{$\mathit{b\_count}(v)=\mathit{b\_count}(u)+1$ \textbf{and} $\mathit{low2}(M_{low1}(v))\geq v$ \textbf{and} ($c_3(M(v))=\emptyset$ \textbf{or} $\mathit{low1}(c_3(M(v)))\geq v$)}{
    mark the triplet $\{(u,p(u)),(v,p(v)),(M_{low1}(v),l_1(M_{low1}(v)))\}$
  }
}
\end{algorithm}

\clearpage

\paragraph*{\textcolor{lgray}{$\blacktriangleright$} Case $B(u)=B(v)\sqcup\{e\}$}
Now let $u$, $v$ be two vertices such that $v$ is an ancestor of $u$ with $B(u)=B(v)\sqcup\{e\}$, for a back-edge $e$.
By Lemma \ref{lemma:identify_e_2}, one of the following cases holds (see Figure~\ref{figure:type2-3cuts-example2}): either $(1)$ $M(v)=M(u)$ and $e=(\mathit{low_M{D}}(u),\mathit{low_M}(u))$, or $(2)$ $M(v)=\tilde{M}(u)$ and $e=(M(u),l_1(M(u)))$, or $(3)$ $M(v)=M_{low1}(u)$ and $e=(M_{low2}(u),l_1(M_{low2}(u))$. In any case, by Lemma~\ref{lemma:find_v_from_u} we have that $v$ is the maximum vertex in $M^{-1}(m)$ less than $u$, where $m=M(u)$, or $m=\tilde{M}(u)$, or $m=M_{low1}(u)$, depending on whether $(1)$, or $(2)$, or $(3)$ is true, respectively. Thus, we can compute all those $3$-cuts by finding, for every vertex $u$, and every $m\in\{M(v),\tilde{M}(u),M_{low1}(u)\}$, the maximum vertex $v$ in $M^{-1}(m)$ less than $u$, and then check whether $B(u)=B(v)\sqcup\{e\}$, for a back-edge $e$. This last condition is equivalent to $\mathit{b\_count}(u)=\mathit{b\_count}(v)+1$.

\begin{figure*}[t!]
\begin{center}
\centerline{\includegraphics[trim={0 0 0 4.7cm}, clip=true, width=\textwidth]{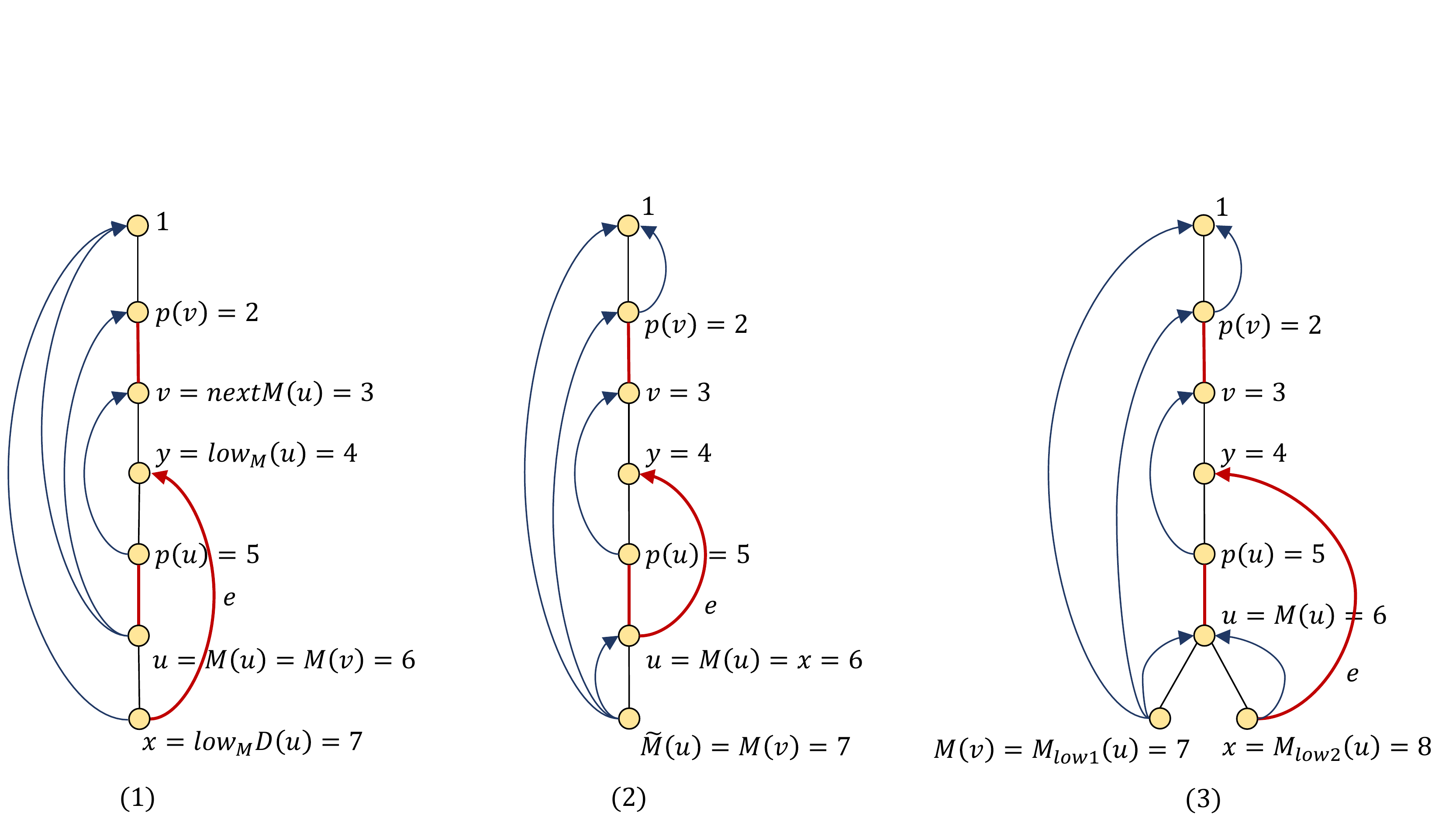}}
\caption{An illustration of the three cases for type-$2$ cuts $\{(u,p(u)),(v,p(v)),e\}$ (shown with red edges) where $v$ is an ancestor of $u$, $e$ is a back-edge, and $B(u)=B(v)\sqcup\{e\}$. (Note that in the examples of Cases (2) and (3) the input graphs have parallel edges.)
Vertices are numbered in DFS order and back-edges are shown directed from descendant to ancestor in the DFS tree.
\label{figure:type2-3cuts-example2}}
\end{center}
\vspace{-1cm}
\end{figure*}

Algorithm~\ref{algorithm:B(u)=B(v)e_nohigh} shows how we can determine all $3$-cuts of this type.
Similar to Algorithm~\ref{algorithm:B(v)=B(u)e_nohigh}, the idea is to handle cases $(1)$, $(2)$ and $(3)$ separately, and our goal is to find, for every vertex $u$, the maximum vertex $v$ in $M^{-1}(m)$ (where $m\in\{M(u),\tilde{M}(u),M_{low1}(u)\}$) which is strictly less than $u$, and then check whether $B(u)=B(v)\sqcup\{e\}$, for a back-edge $e$. We can perform this search in linear time by processing the vertices in a bottom-up fashion, and keep in a variable $\mathit{currentVertex}[m]$ the lowest element of $M^{-1}(m)$ that we accessed so far.
Then we can easily check in constant time whether a pair of vertices $u,v$ such that $v$ is the maximum vertex in $M^{-1}(m)$ which is strictly less than $u$ satisfies $B(u)=B(v)\sqcup\{e\}$, for a back-edge $e$, and also identify this back-edge.

\begin{algorithm}[h!]
\caption{\textsf{Find all $3$-cuts $\{(u,p(u)),(v,p(v)),e)\}$, where $u$ is a descendant of $v$ and $B(u)=B(v)\sqcup\{e\}$, for a back-edge $e$.}}
\label{algorithm:B(u)=B(v)e_nohigh}
\LinesNumbered
\DontPrintSemicolon
initialize an array $\mathit{currentVertex}$ with $n$ entries\;
\textcolor{cblue}{\tcp{\textit{Case (1): $m=M(v)$; just check whether the condition of Lemma~\ref{lemma:find_v_from_u} is satisfied for $\mathit{nextM}(u)$}}}
\ForEach{vertex $u\neq r$}{
  \If{$\mathit{b\_count}(u)=\mathit{b\_count}(\mathit{nextM}(u))+1$}{
    mark the triplet $\{(u,p(u)),(\mathit{nextM}(u),p(\mathit{nextM}(u))),(\mathit{low_{M}D}(u),\mathit{low_M}(u))\}$
  }
}
\textcolor{cblue}{\tcp{\textit{Case (2): $m=\tilde{M}(u)$}}}
\lForEach{vertex $x$}{$\mathit{currentVertex}[x] \leftarrow x$}
\For{$u\leftarrow n$ to $u=1$}{
  $m \leftarrow \tilde{M}(u)$\;
  \lIf{$m=\emptyset$ \textbf{or} $\tilde{M}(u)=M(u)$}{\textbf{continue}}
  \textcolor{cblue}{\tcp{\textit{find the maximum $v\in M^{-1}(m)$ that is smaller than $u$}}}
  $v \leftarrow \mathit{currentVertex}[m]$\;
  \lWhile{$v\neq\emptyset$ \textbf{and} $v\geq u$}{$v \leftarrow \mathit{nextM}(v)$}
  $\mathit{currentVertex}[m] \leftarrow v$\;
  \textcolor{cblue}{\tcp{\textit{check the condition in Lemma~\ref{lemma:find_v_from_u}}}}
  \If{$\mathit{b\_count}(u)=\mathit{b\_count}(v)+1$}{
    mark the triplet $\{(u,p(u)),(v,p(v)),(M(u),l_1(M(u)))\}$
  }
}
\textcolor{cblue}{\tcp{\textit{Case (3): $m=M_{low1}(u)$}}}
\lForEach{vertex $x$}{$\mathit{currentVertex}[x] \leftarrow x$}
\For{$u\leftarrow n$ to $u=1$}{
  $m \leftarrow M_{low1}(u)$\;
  \lIf{$m=\emptyset$ \textbf{or} $l_1(M(u))<u$}{\textbf{continue}}
  \textcolor{cblue}{\tcp{\textit{find the maximum $v\in M^{-1}(m)$ that is smaller than $u$}}}
  $v \leftarrow \mathit{currentVertex}[m]$\;
  \lWhile{$v\neq\emptyset$ \textbf{and} $v\geq u$}{$v \leftarrow \mathit{nextM}(v)$}
  $\mathit{currentVertex}[m] \leftarrow v$\;
  \textcolor{cblue}{\tcp{\textit{check the condition in Lemma~\ref{lemma:find_v_from_u}}}}
  \If{$\mathit{b\_count}(u)=\mathit{b\_count}(v)+1$}{
    mark the triplet $\{(u,p(u)),(v,p(v)),(M_{low2}(u),l_1(M_{low2}(u)))\}$
  }
}
\end{algorithm}

\clearpage

\bibliographystyle{plain}

\end{document}